\documentclass[a4paper, 11pt]{article}

\usepackage{amsmath, amssymb,amsthm,a4wide}       

\usepackage[colorlinks=true]{hyperref}

\newtheorem{lem}{Lemma}
\newtheorem{thm}{Theorem}
\newtheorem{cor}{Corollary}

\newcommand{\DD}{\mathcal{D}}

\newcommand{\VV}{\mathcal{V}}

\newcommand{\PP}[1]{{\text{\normalsize$\mathbb P$}}\left(#1\right)}
\newcommand{\tr}[1]{\text{Tr}\left(#1\right)}
\newcommand{\trr}[1]{\text{Tr}^2\left(#1\right)}

\newcommand{\ML}{\text{\tiny$\!M\!L$}}
\newcommand{\BM}{\text{\tiny$\!B\!M$}}

\newcommand{\Nm}{N}

\newcommand{\brho}{\overline{\rho}}
\newcommand{\blambda}{\overline{\lambda}}
\newcommand{\bP}{\overline{P}}
\newcommand{\bF}{\overline{\boldsymbol{F}}}

\newcommand{\dv}[2]{\frac{\partial #1}{\partial #2}}
\newcommand{\dvv}[2]{\frac{\partial^{2} #1}{\partial #2 ^{2}}}

\begin{document}
\title{Asymptotic expansions of  Laplace integrals for  quantum state tomography }

\author{Pierre Six and Pierre Rouchon\thanks{Centre Automatique et Syst\`{e}mes, Mines-ParisTech, PSL Research University. 60 Bd Saint-Michel,
75006 Paris, France.}}

\maketitle

\abstract{
Bayesian estimation  of a mixed quantum state  can be approximated via maximum likelihood (MaxLike)  estimation when the likelihood function is sharp around its maximum. Such approximations rely  on asymptotic expansions of multi-dimensional  Laplace integrals.  When this maximum is on the boundary of the integration domain, as it is the case when the MaxLike  quantum state is not full rank, such expansions are not standard. We provide here such expansions, even  when  this maximum does not belong to the smooth part of the boundary, as it is the case when the rank deficiency  exceeds  two.  These expansions provide, aside the MaxLike estimate of the quantum state, confidence intervals for any observable. They confirm  the formula  proposed and used  without  precise  mathematical justifications  by the authors  in an article recently published in Physical Review A.  }

\section{Introduction}
	When the probability laws of  the measurement data  $Y$ with respect to the continuous parameter $p$ to estimate is given by an analytic  model, a  widely used  way to fulfil this estimation is Maximum Likelihood (MaxLike) reconstruction (see, e.g., \cite{MoulinesBook2005}). It consists in choosing as estimate $p_{\ML}$,  the value of $p$ that maximizes the conditional probability $\PP{Y~|~p}$ of the data  $Y$. Indeed, when the amount of independent  measurements  forming the  data $Y$ is  large, the function $p \mapsto \PP{Y~|~p}$ becomes extremely sharp around its maximal value, and the MaxLike estimate $p_{\ML}$ is a good approximation of the Bayesian mean estimate  $p_{\BM}$:
 $$
	p_{\BM} = \int_{\DD} p \, \PP{p~|~Y}  \, \mathrm dp = \frac{\int_{\DD} p \, \PP{Y~|~p}  \, \mathbb{P}_0(p) \, \mathrm dp}
			{\int_{\DD} \PP{Y~|~p}  \, \mathbb{P}_0(p) \, \mathrm dp}
	$$
with $\DD\subset \mathbb{R}^{\dim p}$ being the set  of physically acceptable values for $p$, $\PP{p~|~Y}$ the probability density of $p$ knowing $Y$ and $\mathbb{P}_0(p)$ any a priori probability  density for $p$.
	
Relying only on MaxLike estimation has the advantage of providing easy-to-compute algorithms. The first and second derivatives of $\PP{Y~|~p} $ versus $p$ can be derived with finite difference method, gradient-like optimization methods can be used and one can extract the Cram\'er-Rao bound from the Hessian of the log-likelihood function to get a lower bound of the mean estimation error when this Hessian matrix is not degenerate. Nevertheless,  some technicalities can arise, in particular  for quantum state tomography~\cite{paris-rehacek-book2004}, where the parameter $p$ to estimate corresponds to  a quantum  state $\rho$ an element of  the compact convex domain  $\DD$ formed by the set of non negative Hermitian matrix of trace one.  In practice, MaxLike estimates $\rho_{\ML}$ could be  of low-rank, i.e. on the boundary   of $\DD$  as noticed  in~\cite{Blume2010NJoP} and observed in~\cite{SixPRA2016}.

All these reasons lead us to  consider   Bayesian Mean Estimations (BME) in the general setting when the parameter $p$ lives in a finite dimensional and compact  domain $\DD$ with piece-wise smooth boundary.  As the magnitude  of $\PP{Y~|~p}$ grows (or decreases) exponentially fast compared to the number  $ \Nm $ of independent measurements generating the measurement set $Y$, we consider the scaled  log-likelihood function  $f(p) = \tfrac{1}{ \Nm } \log\left(\PP{Y~|~p}\right)$. We then address  the problem of computing, for any smooth  scalar functions $f$ and $g$ and under various conditions, the asymptotic development when $ \Nm $ tends towards infinity of  the Laplace's integral:
	\begin{equation} \label{eq:Ig_def}
	\mathcal{I}_g( \Nm ) = \int_{\DD} g(p) \exp\left( \Nm  f(p)\right) \,  \mathrm dp.
	\end{equation}
Such asymptotic expansions have been investigated since a long time. They involve  integration by parts, Watson's lemma, Laplace's method, stationary phase, steepest descents  and Hironaka's resolution of singularities: see \cite{BleistenHandelsmanBook} for $\dim p=1$ and   the  regular  case  when $\dim p \geq 1$; see \cite{ArnoldGuseinZadeVarchenkoVol2} for the singular case  in arbitrary dimension    and its much  more elaborate analysis. In the analytic case  and around the maximum  of $f$ at $p_\ML$ inside the domain $\DD$, these expansions rely on terms like $e^{\Nm f(p_\ML)}\frac{(\log\Nm)^k}{\Nm^{\alpha}}$ where $k$ is a non negative  integer less than $\dim p -1$ and where $\alpha$ is rational  and strictly positive \cite[page 231]{ArnoldGuseinZadeVarchenkoVol2}. From such series expansions, stem fundamental connections between   algebraic geometry and statistical learning theory in the singular case, i.e. when the Hessian of $f$ at $p_\ML$ is not  negative definite. This is the object of singular learning theory developed in~\cite{WatanabeBook2009} and in~\cite{ShaoweiLinPhD2011}.

It is interesting to notice that, as far as we know,   very few results  can be found when $p_\ML$ lies on the boundary of $\DD$, excepted the case when $p_\ML$ is on a smooth part of the boundary.
In~\cite[section 8.3]{BleistenHandelsmanBook}, the derivation of   the leading term is explained  when $p_\ML$ is on the smooth part of the boundary and when the Hessian of the  restriction of $f$ to this smooth part is  negative definite; sub-section 8.3.4 of~\cite{ArnoldGuseinZadeVarchenkoVol2} provides precise indications showing, when the Hessian of the  restriction of $f$   is degenerate, that  an asymptotic expansion  exists and is  similar to the one obtained for  $p_\ML$  in the interior of $\DD$.

For quantum state estimation, this ensures the existence of asymptotic expansion in any case when $\rho_\ML$ has  either a  full rank (interior of $\DD$) or  rank deficiency of one (smooth part of the boundary of $\DD$). For rank deficiency exceeding strictly one,  $\rho_\ML$  does not belong to  the smooth part  of the boundary. As far as we know, the derivations of  asymptotic expansions  in these singular cases when the rank deficiency  of $\rho_\ML$ exceeds two     have not been precisely addressed up to now. This  paper  is a fist attempt to derive such asymptotic expansion of the Bayesian mean and variance when the log-likelihood function  reaches it maximum  on the boundary of $\DD$, i.e. when $\rho_\ML$ is of low rank.

The goal of this paper is  twofold. Firstly, we provide the leading terms of   specific  asymptotic  expansions when $p_\ML$ lies in an half space.    This is the object of section~\ref{sec:boundary} where we assume  that the restriction of $f$ to the boundary    admits a non-degenerate  maximum at $p_\ML$ (see theorem~\ref{thm:boundary}).  Secondly,  we consider quantum state estimation and  reformulate these leading terms intrinsically in terms of matrix product and trace.  This is object of section~\ref{sec:Qtomo}, where we recall  the precise structures of $f$ and $g$ in this case and exploit convexity  and unitary invariance.  We provide in this section precise mathematical justifications of  the  necessary and sufficient optimality conditions given without details in~\cite[eq. (8)]{SixPRA2016} (see lemma~\ref{lem:Qtomo1} below)  and  of the Bayesian variance  approximation corresponding to equation~(10) in~\cite{SixPRA2016} (see theorem~\ref{thm:Qtomo2}).

\section{Asymptotic expansion of Laplace's integral} \label{sec:boundary}

Here, we assume that $p$ is of dimension $n$  and that $\DD= (-1,1)^{n} $. Set $p=z$ with $z\in\mathbb{R}^{n}$.  Then~\eqref{eq:Ig_def} reads with:
\begin{equation} \label{eq:IgLocIn}
	\mathcal{I}_g( \Nm ) = \int_{z\in(-1,1)^{n}}  g(z) \exp\left( \Nm  f(z)\right) \,  \mathrm dz
.
\end{equation}
   \begin{thm}\label{thm:interior}
     Consider~\eqref{eq:IgLocIn} where $f$ and $g$ are analytic functions of $z$ on a compact neighbourhood of $\overline{\DD}$, the closure of $\DD$.  Assume that $f$ admits a unique maximum on $\overline{\DD}$ at $z=0$ with  $\left.\dvv{f}{z}\right|_{0}$ negative definite.

     If $g(0)\neq 0$,  we have the following  dominant term in the asymptotic expansion of $\mathcal{I}_g( \Nm )$ for large $\Nm$:
     \begin{equation} \label{case1:thm:interior}
        \mathcal{I}_g( \Nm ) = \left(\frac{g(0) ~(2\pi)^{n/2} ~e^{\Nm f(0)}
        \Nm^{-n/2}  }{\sqrt{\left| \det\left(\left.\dvv{f}{z}\right|_{0}\right) \right|~ }}\right)
           + O\Big(e^{\Nm f(0)}\Nm^{-n/2-1}\Big)
        .
     \end{equation}

     If $g(0)=0$, with $\left.\dv{g}{x}\right|_{0}=0$ and $\left.\dv{g}{z}\right|_{0}=0$, then we have:
         \begin{multline} \label{case2:thm:interior}
     \mathcal{I}_g( \Nm ) =
     \left(\frac{\tr{-\left.\dvv{g}{z}\right|_{0} \left(\left.\dvv{f}{z}\right|_{0}\right)^{-1} } ~(2\pi)^{n/2}}{2 ~\sqrt{\left| \det\left(\left.\dvv{f}{z}\right|_{0}\right) \right|~ } } \right)
     e^{\Nm f(0)}
        \Nm^{-n/2-1}
        \\
+ O\left(e^{\Nm f(0)} \Nm^{-n/2-2}\right)
     .
        \end{multline}
   \end{thm}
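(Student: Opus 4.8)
The plan is to reduce \eqref{eq:IgLocIn} to a standard multidimensional Gaussian integral by localization and rescaling, and then to read off the two leading coefficients from elementary Gaussian moments. Write $A=-\left.\dvv{f}{z}\right|_{0}$, which is symmetric positive definite by hypothesis and satisfies $\det A=\left|\det\left(\left.\dvv{f}{z}\right|_{0}\right)\right|$. Since $f$ is analytic with a \emph{unique} maximum at $0$, for every small $\delta>0$ there is $\eta>0$ with $\max_{|z|\ge\delta}f(z)\le f(0)-\eta$; hence the contribution of $\{|z|\ge\delta\}$ to $\mathcal{I}_g(\Nm)$ is $O\!\left(e^{\Nm(f(0)-\eta)}\right)$, which is exponentially smaller than every term in \eqref{case1:thm:interior}--\eqref{case2:thm:interior}. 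It therefore suffices to analyse the integral over a fixed small ball around $0$, and I may shrink $\delta$ so that the quadratic part dominates the Taylor remainder of $f$, giving $f(z)\le f(0)-c|z|^2$ on that ball for some $c>0$.

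Next I would substitute $z=w/\sqrt{\Nm}$, so that $\mathrm dz=\Nm^{-n/2}\,\mathrm dw$ and, using $\left.\dv{f}{z}\right|_{0}=0$ together with analyticity,
\[
  \Nm f(z)=\Nm f(0)-\tfrac12\,w^{\!\top}Aw+\tfrac{1}{\sqrt{\Nm}}\,R_3(w)+\cdots,
\]
where $R_3$ is the cubic Taylor form of $f$ at $0$, odd in $w$. This turns $\mathcal{I}_g(\Nm)$ into $\Nm^{-n/2}e^{\Nm f(0)}$ times an integral, over the dilated box $(-\sqrt{\Nm},\sqrt{\Nm})^n$, of $g(w/\sqrt{\Nm})\,\exp\!\left(-\tfrac12 w^{\!\top}Aw\right)$ multiplied by the expansion of $\exp\!\big(\tfrac{1}{\sqrt{\Nm}}R_3(w)+\cdots\big)$. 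Expanding this factor and $g(w/\sqrt{\Nm})$ in powers of $1/\sqrt{\Nm}$ reduces everything to Gaussian moments against $e^{-\frac12 w^{\!\top}Aw}$, for which I would use $\int_{\mathbb R^n}e^{-\frac12 w^{\!\top}Aw}\,\mathrm dw=(2\pi)^{n/2}/\sqrt{\det A}$ and $\int_{\mathbb R^n}w_iw_j\,e^{-\frac12 w^{\!\top}Aw}\,\mathrm dw=(A^{-1})_{ij}(2\pi)^{n/2}/\sqrt{\det A}$.

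In the case $g(0)\ne0$ the constant term $g(0)$ produces the leading factor of \eqref{case1:thm:interior} at once; the $1/\sqrt{\Nm}$ correction arises only from $\left.\dv{g}{z}\right|_{0}\!\cdot w$ and from $R_3(w)$, both odd, so it integrates to zero against the even Gaussian and the remainder is pushed to $O(e^{\Nm f(0)}\Nm^{-n/2-1})$. In the case $g(0)=0$, $\left.\dv{g}{z}\right|_{0}=0$, the surviving leading contribution is the quadratic term $\tfrac{1}{2\Nm}\,w^{\!\top}\!\left(\left.\dvv{g}{z}\right|_{0}\right)w$, which the second-moment identity contracts to $\tfrac12\,\tr{\left(\left.\dvv{g}{z}\right|_{0}\right)A^{-1}}\,(2\pi)^{n/2}/\sqrt{\det A}$; substituting $A^{-1}=-\left(\left.\dvv{f}{z}\right|_{0}\right)^{-1}$ gives exactly the coefficient of \eqref{case2:thm:interior}, while the same parity argument cancels the $\Nm^{-n/2-3/2}$ term and yields the stated $O(e^{\Nm f(0)}\Nm^{-n/2-2})$ error.

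The step needing the most care is the rigorous control of the remainder, because after rescaling the domain $(-\sqrt{\Nm},\sqrt{\Nm})^n$ grows with $\Nm$ and the correction $R_3(w)/\sqrt{\Nm}+\cdots$ in the exponent is not globally small. I would handle this by cutting at $|w|\le\Nm^{\alpha}$ with $\alpha\in(0,\tfrac12)$: on the tail $|w|>\Nm^{\alpha}$ the bound $f(z)\le f(0)-c|z|^2$ provides an integrable Gaussian majorant whose contribution is super-polynomially small, while on $|w|\le\Nm^{\alpha}$ the finite Taylor expansions of $f$ and $g$ are uniformly valid (the $k$-th remainder being $O(\Nm^{\alpha(k+1)-(k-1)/2})\to0$ for $k$ large and $\alpha$ close to $\tfrac12$), which legitimises the term-by-term integration. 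The only genuinely delicate bookkeeping is then checking that every half-integer power of $1/\sqrt{\Nm}$ multiplies an odd integrand and hence drops out, which is precisely what converts the naive error estimates into the integer-power bounds asserted in \eqref{case1:thm:interior}--\eqref{case2:thm:interior}.
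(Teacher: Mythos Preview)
Your argument is correct and follows the classical Laplace route: rescale $z=w/\sqrt{\Nm}$, Taylor-expand $f$ and $g$, reduce to Gaussian moments, and use parity to kill the half-integer powers of $1/\sqrt{\Nm}$. The paper proceeds differently: it applies the Morse lemma to obtain an analytic change of variables $\tilde z=\psi(z)$ under which $f(z)-f(0)=-\tfrac12|\tilde z|^2$ \emph{exactly}, so that after the substitution the exponent is an exact Gaussian on a fixed box and only the transformed amplitude $\tilde g(\tilde z)$ needs to be expanded; the leading and subleading coefficients are then read off by integration by parts rather than by moment identities. Your method is more elementary (no Morse lemma) and produces the coefficients directly from $\int e^{-\frac12 w^\top Aw}\,\mathrm dw=(2\pi)^{n/2}/\sqrt{\det A}$ and $\int w_iw_j\,e^{-\frac12 w^\top Aw}\,\mathrm dw=(A^{-1})_{ij}(2\pi)^{n/2}/\sqrt{\det A}$, at the cost of the growing-domain cutoff you correctly flag as the delicate step. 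The paper's method avoids that cutoff altogether---with an exact Gaussian exponent on a fixed box, integration by parts controls remainders uniformly---but must instead track how $\left.\dvv{\tilde g}{\tilde z}\right|_{0}$ relates to $\left.\dvv{g}{z}\right|_{0}$ through the Morse change of variables, which is where the identity $\tr{\left.\dvv{\tilde g}{\tilde z}\right|_{0}}=\tr{-\left.\dvv{g}{z}\right|_{0}\left(\left.\dvv{f}{z}\right|_{0}\right)^{-1}}\big/\sqrt{|\det(\left.\dvv{f}{z}\right|_{0})|}$ enters (and is later isolated as a coordinate-invariance lemma used again in the boundary case).
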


  \begin{proof}
Since $f$ is analytic,
$
f(z)= f(0) -   h(z)
$
 where $h$ is an analytic function of $z$ only with $h(0)=0$, $\left.\dv{h}{z}\right|_{0}=0$ and $\left.\dvv{h}{z}\right|_{0}=-\left.\dvv{f}{z}\right|_{0}$ positive  definite.

Via the Morse lemma (see, e.g., \cite{MilnorBook63}), there exists a local diffeomorphism on $z$  around  $0$, written $\tilde z=\psi(z)$, such that $\psi(0)=0$ and  $h(z)=  \frac{1}{2} \sum_{k=1}^{n} (\psi_k(z))^2$. Moreover, we can chose $\psi$  such that $\left.\dv{\psi}{z}\right|_{0}=\sqrt{-\left.\dvv{f}{z}\right|_{0}}$ is a positive definite symmetric matrix.

Take $\eta\in(0,1)$ small. There exists a $c < f(0)$ such that, $\forall z\in(-1,1)^n/(-\eta,\eta)^n$, $f(z)\leq c$. Since:
\begin{multline*}
 \mathcal{I}_g( \Nm ) = \int_{z\in(-\eta,\eta)^n}   g(z) e^{\Nm  f(z)} \, \mathrm dz
 +\int_{z\in(-1,1)^n/(-\eta,\eta)^n}  g(z) e^{\Nm  f(z)} \,  \mathrm dz
 \\
 =
  e^{\Nm  f(0)}
 \left( \int_{z\in(-\eta,\eta)^n} g(z) e^{\Nm  (f(z)-f(0))} \, \mathrm dz +
O\left( e^{-\Nm  (f(0)-c)}\right)
 \right),
\end{multline*}
we only keep:
$$
I_\eta(\Nm)= \int_{z\in(-\eta,\eta)^n}g(z) e^{\Nm  (f(z)-f(0))} \, \mathrm  dz
$$
Since $\eta$ is small,  we can consider the change of variable $\tilde z=\psi(z)$ that yields:
$$
  I_\eta(\Nm)= \int_{\tilde z\in\psi((-\eta,\eta)^n)}  \tilde g(\tilde z) e^{ - \frac{\Nm}{2}\sum_{k=1}^{n} \tilde z_k^2  } \, \mathrm   d\tilde z
$$
where:
\begin{equation}\label{eq:gtildeInt}
  \tilde g(\tilde z)=  \frac{ g(\psi^{-1}(\tilde z))  }{ \sqrt{\left| \det\left(\left.\dvv{f}{z}\right|_{0}\right) \right| } } ~(1+\tilde d(\tilde z))
\end{equation}
and $\tilde d$ is an analytic function with $\tilde d(0)=0$. There exists $\tilde \eta >0$ such that $ (-\tilde \eta,\tilde \eta)^n \subset \psi((-\eta,\eta)^n)$. Thus, similarly to the passage from
$ \mathcal{I}_g( \Nm ) $ to $I_\eta(\Nm)$, we can, up to exponentially small terms versus $\Nm$,  just  consider the asymptotic expansion of:
\begin{equation}
\label{eq:Itilde_z}
\tilde I_{\tilde \eta} = \int_{\tilde z \in (-\tilde \eta,\tilde \eta)^n}\tilde g(\tilde z) e^{ -\frac{\Nm}{2}\sum_{k=1}^{n} \tilde z_k^2  } \, \mathrm d\tilde z
.
\end{equation}

When $g(0)\neq 0$, we have  $\tilde g(0) \neq 0$. Set $\tilde g(\tilde z)= \tilde g(0) + \sum_{k=1}^n \tilde z_k \tilde h_k(\tilde z) $ with  $\tilde h_k$ bounded analytic functions  on $(-\tilde \eta,\tilde \eta)^n$. We get:
$$
 \tilde I_{\tilde \eta} = \tilde g(0) \int_{\tilde z \in (-\tilde \eta,\tilde \eta)^n }e^{ -\frac{\Nm}{2}\sum_{k=1}^{n} \tilde z_k^2  }~ \, \mathrm d\tilde z
 +\int_{\tilde z \in (-\tilde \eta,\tilde \eta)^n } \left( \sum_{k=1}^n \tilde z_k \tilde h_k(\tilde z) \right) e^{ - \frac{\Nm }{2}\sum_{k=1}^{n} \tilde z_k^2 } \, \mathrm d\tilde z
 .
$$
Up to exponentially small terms versus $\Nm$,  the first integral in the right hand-side member can be replaced by:
$$
 \int_{\tilde z \in  (-\infty,+\infty)^n} e^{ - \frac{\Nm}{2}\sum_{k=1}^{n} \tilde z_k^2 } \,  \mathrm d\tilde z
 =  \left(\frac{2\pi}{\Nm}\right)^{n/2}
 .
 $$
A single integration by part versus $z_k$ yields:
 \begin{multline*}
     \int_{\tilde z \in (-\tilde \eta,\tilde \eta)^n } \tilde z_k \tilde h_k(\tilde z) e^{ - \frac{\Nm}{2}\sum_{k=1}^{n} \tilde z_k^2 } \, \mathrm  d\tilde z
     \\
     = \frac{1}{\Nm}\int_{\tilde z \in (-\tilde \eta,\tilde \eta)^n }
      \dv{\tilde h_k}{\tilde z_k}(\tilde z)
     e^{ - \frac{\Nm}{2}\sum_{k=1}^{n} \tilde z_k^2 } \, \mathrm  d\tilde z
     + O(e^{-\tilde \eta^2 \Nm/2}/\Nm)
     .
 \end{multline*}
     This implies~\eqref{case1:thm:interior}, via
 $\tilde I_{\tilde \eta} =   \tilde g(0) \left(\frac{2\pi}{\Nm}\right)^{n/2} \left( 1 + O(1/\Nm)\right)$ and $\tilde g(0)=   \frac{ g(0)}{ \sqrt{\left| \det\left(\left.\dvv{f}{z}\right|_{0}\right) \right| } } $.

 Assume now that  $g(0)=0$  and $\left.\dv{g}{z}\right|_{0}=0$. Consider then the function  $\tilde g$ in ~\eqref{eq:gtildeInt}. We have  $\tilde g(0)=0$ and $\left.\dv{\tilde g}{\tilde z}\right|_{0}=0$. Moreover, writting:
 $$
 \kappa_0=\sqrt{\left| \det\left(\left.\dvv{f}{z}\right|_{0}\right) \right|~ },
 $$
  we have:
 $$\kappa_0 \tilde g(\psi(z)) = g(z) ~(1+ e(z))$$ with $e$ an analytic  function with  $e(0)=0$. Thus, for any $i,j\in\{1,\ldots, n\}$,
 $$
\left.\frac{\partial^2 g}{\partial z_i\partial z_j}\right|_{0}=  \kappa_0 \sum_{k,k'=1}^{n}\left.\frac{\partial ^2 \tilde g}{\partial \tilde z_k\partial \tilde z_{k'}}\right|_{0} \left.\dv{\psi_k}{z_i}\right|_{0}\left.\dv{\psi_{k'}}{z_i}\right|_{0}.
 $$
 Since $\left.\dv{\psi}{z}\right|_{0}= \sqrt{-\left.\dvv{f}{z}\right|_{0}}$, we have:
 $$
 \kappa_0 \left.\dvv{\tilde g}{\tilde z}\right|_{0} = \left(\sqrt{-\left.\dvv{f}{z}\right|_{0}}\right)^{-1} \left.\dvv{g}{z}\right|_{0} \left(\sqrt{-\left.\dvv{f}{z}\right|_{0}}\right)^{-1},
 $$
 and thus:
 \begin{equation}\label{eq:tracegInt}
 \tr{\left.\dvv{\tilde g}{\tilde z}\right|_{0}} = \frac{\tr{-\left.\dvv{g}{z}\right|_{0} \left(\left.\dvv{f}{z}\right|_{0}\right)^{-1} }}{\sqrt{\left| \det\left(\left.\dvv{f}{z}\right|_{0}\right) \right|}}
\end{equation}
Since $\tilde g$ and its first partial derivatives with respect to $\tilde z_k$ vanish, we have:
$$
\tilde g(\tilde z)=  \sum_{k,k'=1}^n \tilde z_k \tilde z_{k'} \tilde b_{k,k'} (\tilde z),
$$
where the function  $\tilde b_{k,k'}$  are analytic . To evaluate the integral in~\eqref{eq:Itilde_z}, we have to consider  the dominant terms of the following integrals:
$$
     B_{k,k'}= \int_{\tilde z \in (-\tilde \eta,\tilde \eta)^n } \tilde z_k \tilde z_{k'} \tilde b_{k,k'} (\tilde z)  e^{ - \frac{\Nm}{2}\sum_{l=1}^{n} \tilde z_l^2  } \, \mathrm d\tilde z
  .
$$
 For $k\neq k'$,  one integration by part versus $\tilde z_k$ followed by another one versus $\tilde z_{k'}$, yield    to  $B_{k,k'}= O\left(\Nm^{-n/2-2}\right)$. For $k=k'$,  we can perform a single integration by part versus $\tilde z_k$:
 \begin{multline*}
\int_{\tilde z \in (-\tilde \eta,\tilde \eta)^n } \tilde z_k^2 \tilde b_{k,k} (\tilde z)  e^{ - \frac{\Nm}{2}\sum_{l=1}^{n} \tilde z_l^2  } \, \mathrm d\tilde z
\\
= \frac{1}{\Nm} \int_{\tilde z \in (-\tilde \eta,\tilde \eta)^n } \left( \tilde b_{k,k} (\tilde z)+ \tilde z_k \dv{\tilde b_{k,k}}{\tilde z_k} (\tilde z)\right)
 e^{ - \frac{\Nm}{2}\sum_{l=1}^{n} \tilde z_l^2  } \, \mathrm d\tilde z + O(e^{-\Nm\tilde \eta^2/2})
\\
=
 \frac{\tilde b_{k,k}(0)}{\Nm} \left(\frac{2\pi}{\Nm}\right)^{n/2}
+ O\left(\Nm^{-n/2-2}\right)
.
 \end{multline*}
The sum $ \sum_{k,k'} B_{k,k'}$  corresponds to   the integral $ \tilde I_{\tilde \eta}$ and  reads:
$$
\tilde I_{\tilde \eta}(\Nm) = \frac{\sum_{k=1}^{n} \tilde b_{k,k}(0)}{\Nm} \left(\frac{2\pi}{\Nm}\right)^{n/2}
+ O\left(\Nm^{-n/2-2}\right)
.
$$
Since up to exponentially small terms,  $ \tilde I_{\tilde \eta}$ and $e^{-\Nm f(0)} \mathcal{I}_g( \Nm ) $ coincide, we  get~\eqref{case2:thm:interior} using~\eqref{eq:tracegInt}  since
$
\sum_{k=1}^{n} \tilde b_{k,k}(0)=\tfrac{1}{2} \tr{\left.\dvv{\tilde g}{\tilde z}\right|_{0}}
$. 
\end{proof}

We assume now that  $p\in\mathbb{R}^{n+1}$, $n+1$ being the dimension of $p$  ($n$ non-negative integers),  and that $\DD=(0, 1)\times (-1,1)^{n} $. Set $p=(x,z)$ with $x\in\mathbb{R}$ and $z\in\mathbb{R}^{n}$.  Then~\eqref{eq:Ig_def} reads when  $g(x,z)$ is replaced by $x^m g(x,z)$, with $m$ a non negative integer:
\begin{equation} \label{eq:IgLoc}
	\mathcal{I}_g( \Nm ) = \int_{x\in (0,1)} \int_{z\in(-1,1)^{n}} x^m g(x,z) \exp\left( \Nm  f(x,z)\right) \,  \mathrm dx~\mathrm dz
.
\end{equation}
   \begin{thm}\label{thm:boundary}
     Consider~\eqref{eq:IgLoc}, where $f$ and $g$ are analytic   functions of $(x,z)$ on a compact neighbourhood of $\overline{\DD}$, the closure of $\DD$.  Assume that $f$ admits a unique maximum on $\overline{\DD}$ at $(x,z)=(0,0)$, with  $\left.\dvv{f}{z}\right|_{(0,0)}$ negative definite and   $\left.\dv{f}{x}\right|_{(0,0)}  <0$.

     If $g(0,0)\neq 0$,  we have the following  dominant term in the asymptotic expansion of $\mathcal{I}_g( \Nm )$ for large $\Nm$:
     \begin{equation} \label{case1:thm:boundary}
        \mathcal{I}_g( \Nm ) = \left(\frac{g(0,0) ~m! ~(2\pi)^{n/2} ~e^{\Nm f(0,0)}
        \Nm^{-m-n/2-1}  }{\sqrt{\left| \det\left(\left.\dvv{f}{z}\right|_{(0,0)}\right) \right|~ } \left(-\left.\dv{f}{x}\right|_{(0,0)}\right)^{m+1}}\right)
           + O\Big(e^{\Nm f(0,0)}\Nm^{-m-n/2-2}\Big)
        .
     \end{equation}

     If $g(0,0)=0$, with $\left.\dv{g}{x}\right|_{(0,0)}=0$ and $\left.\dv{g}{z}\right|_{(0,0)}=0$, then we have:
         \begin{multline} \label{case2:thm:boundary}
     \mathcal{I}_g( \Nm ) =
     \left(\frac{\tr{-\left.\dvv{g}{z}\right|_{(0,0)} \left(\left.\dvv{f}{z}\right|_{(0,0)}\right)^{-1} }~m! ~(2\pi)^{n/2}}{2 ~\sqrt{\left| \det\left(\left.\dvv{f}{z}\right|_{(0,0)}\right) \right|~ } \left(-\left.\dv{f}{x}\right|_{(0,0)}\right)^{m+1}} \right)
     e^{\Nm f(0,0))}
        \Nm^{-m-n/2-2}
        \\
+ O\left(e^{\Nm f(0,0))} \Nm^{-m-n/2-3}\right)
     .
        \end{multline}
   \end{thm}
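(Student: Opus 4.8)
The plan is to follow the proof of theorem~\ref{thm:interior} line by line, the single new ingredient being the half-line variable $x\in(0,1)$, along which $f$ decays \emph{linearly} (because $\left.\dv{f}{x}\right|_{(0,0)}<0$) rather than quadratically. Localisation is identical: since the maximum of $f$ on $\overline{\DD}$ is unique at $(0,0)$, there is $c<f(0,0)$ with $f\le c$ outside the corner box $(0,\eta)\times(-\eta,\eta)^n$, so up to the factor $e^{\Nm f(0,0)}$ and an error $O(e^{-\Nm(f(0,0)-c)})$ it suffices to analyse the integral over that box.

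Next I would straighten $f$ by an analytic change of variables near $(0,0)$. Since $\left.\dvv{f}{z}\right|_{(0,0)}$ is negative definite, the implicit function theorem produces an analytic $x\mapsto\zeta(x)$ with $\zeta(0)=0$ and $\left.\dv{f}{z}\right|_{(x,\zeta(x))}=0$; set $\phi(x)=f(x,\zeta(x))$, so that $\phi(0)=f(0,0)$ and, by the envelope identity, $\phi'(0)=\left.\dv{f}{x}\right|_{(0,0)}<0$. A Morse lemma in $z$ with $x$ as parameter then gives a diffeomorphism $w=\Psi(x,z)$ with $\Psi(x,\zeta(x))=0$, $\left.\dv{\Psi}{z}\right|_{(0,0)}=\sqrt{-\left.\dvv{f}{z}\right|_{(0,0)}}$, and $f(x,z)=\phi(x)-\tfrac12\sum_k w_k^2$. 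Finally, with $a=-\phi'(0)=-\left.\dv{f}{x}\right|_{(0,0)}>0$, the scalar map $u=(f(0,0)-\phi(x))/a$ is an analytic diffeomorphism with $u(0)=0$, $u'(0)=1$, so $x(u)^m=u^m(1+O(u))$. In the coordinates $(u,w)$ one has $f=f(0,0)-au-\tfrac12|w|^2$; the Jacobian matrix of $(x,z)\mapsto(u,w)$ is block-triangular with determinant $\sqrt{|\det(\left.\dvv{f}{z}\right|_{(0,0)})|}$ at the origin, and the box integral becomes, up to exponentially small terms,
\[
\int_{u>0,\;w}u^{m}\,G(u,w)\,e^{-\Nm a u-\frac{\Nm}{2}|w|^2}\,\mathrm du\,\mathrm dw,
\]
where $G$ is analytic with $G(0,0)=g(0,0)/\sqrt{|\det(\left.\dvv{f}{z}\right|_{(0,0)})|}$.

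The asymptotics then rest on two elementary one-dimensional blocks, each exact after extending the box to $(0,\infty)$, resp. $\mathbb R$, at the cost of exponentially small corrections: the half-line moment $\int_0^\infty u^{m+j}e^{-\Nm a u}\,\mathrm du=(m+j)!\,(\Nm a)^{-(m+j+1)}$ and the Gaussian moments of $e^{-\frac{\Nm}2 w_k^2}$. When $g(0,0)\neq0$, retaining $G(0,0)$ yields the stated leading term~\eqref{case1:thm:boundary}; the remainder is $O(\Nm^{-m-n/2-2})$, since the $u$-linear part of $G$ costs one extra half-line moment (a factor $(m+1)/(\Nm a)$) while the $w$-linear part vanishes at leading order and is controlled by one integration by parts versus $w_k$, exactly as in theorem~\ref{thm:interior}. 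When $g(0,0)=0$ and $\left.\dv{g}{x}\right|_{(0,0)}=\left.\dv{g}{z}\right|_{(0,0)}=0$, the chain rule forces $G(0,0)=0$ and all first partials of $G$ to vanish at the origin, so I would expand $G$ to second order: the mixed $uw_k$ and the off-diagonal $w_kw_{k'}$ terms integrate to zero by oddness of the Gaussian, the $u^2$ term contributes only at order $\Nm^{-m-n/2-3}$ (two extra half-line moments), and only the diagonal $\tfrac12\sum_k\left.\dvv{G}{w_k}\right|_0 w_k^2$ survives, producing the power $\Nm^{-m-n/2-2}$ of~\eqref{case2:thm:boundary}. Its coefficient follows from $\left.\dv{\Psi}{z}\right|_{(0,0)}=\sqrt{-\left.\dvv{f}{z}\right|_{(0,0)}}$ through the very same computation as in~\eqref{eq:tracegInt}, which gives $\tr{\left.\dvv{G}{w}\right|_0}=\tr{-\left.\dvv{g}{z}\right|_{(0,0)}(\left.\dvv{f}{z}\right|_{(0,0)})^{-1}}/\sqrt{|\det(\left.\dvv{f}{z}\right|_{(0,0)})|}$.

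The main obstacle I anticipate is the bookkeeping of the combined substitution: verifying that the Jacobian factorises cleanly at the origin and that the correction $x(u)^m=u^m(1+O(u))$ feeds only into subleading orders, and, in the degenerate case, checking rigorously that the transverse $u^2$ and mixed $uw$ contributions are of strictly lower order, so that the surviving trace involves the $z$-Hessian of $g$ alone. Both points are handled by the same integration-by-parts estimates already used in theorem~\ref{thm:interior}.
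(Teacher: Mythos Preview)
Your proposal is correct and follows the same overall architecture as the paper: localise near the corner, push $f$ to a normal form by an analytic change of variables so that the integral splits into a half-line factor in the boundary-transverse variable and a Gaussian factor in the tangential variables, then control the remainders by integration by parts. The difference lies in the choice of normalising diffeomorphism. You first locate the fibrewise critical curve $z=\zeta(x)$, apply a \emph{parametric} Morse lemma $w=\Psi(x,z)$ to flatten $f$ along the $z$-fibres, and only then rescale the scalar variable $x\mapsto u$; your $u$ depends on $x$ alone while $w$ depends on both $x$ and $z$. The paper does the dual thing: it writes $f(x,z)=f(0,0)-x f_1(x,z)-h(z)$ with $h(z)=f(0,0)-f(0,z)$, applies the \emph{ordinary} Morse lemma to $h$ so that $\tilde z=\psi(z)$ depends on $z$ alone, and absorbs all the $x$--$z$ coupling into $\tilde x=x f_1(x,z)$. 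The paper's route avoids the parametric Morse lemma entirely, at the price of a slightly messier $\tilde x$; your route keeps the half-line variable clean, at the price of the parametric Morse step. Either way the Jacobian at the origin factorises identically and the trace identity for $\left.\dvv{G}{w}\right|_0$ follows by the same computation as~\eqref{eq:tracegInt}. One small caution: your phrase ``integrate to zero by oddness of the Gaussian'' for the $uw_k$ and off-diagonal $w_kw_{k'}$ pieces is only literally true for the constant part of their coefficients; since those coefficients are analytic functions, you do need the integration-by-parts argument you mention at the end (and which the paper carries out explicitly) to get the stated error orders.
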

   For clarity's sake, we consider here  the analytic  situation, despite the fact that  the above asymptotics are  also valid in the $C^{m+3}$ case.

\begin{proof}
  We adapt  here the method sketched  in section 8.3.4 of ~\cite{ArnoldGuseinZadeVarchenkoVol2} for oscillatory integrals in a halfspace.
Since $f$ is analytic, we have
$$
f(x,z)= f(0,0) - x  f_1(x,z) -  h(z)
$$
where  $f_1$ is analytic with $f_1(0,0)=-\left.\dv{f}{x}\right|_{(0,0)}>0$, where $h$ is an analytic function of $z$ only, with $h(0)=0$, $\left.\dv{h}{z}\right|_{0}=0$ and $\left.\dvv{h}{z}\right|_{0}=-\left.\dvv{f}{z}\right|_{(0,0)}$ positive  definite.

Set $\phi(x,z)= x f_1(x,z) $. Consider the following map
$(x,z) \mapsto (\tilde x= \phi(x,z), z)$. It is a local diffeomorphism around $(0,0)$ that preserves the sign of $x$, i.e. $x \phi(x,z) \geq 0$.  Moreover, using the Morse lemma (see, e.g., \cite{MilnorBook63}), there exists a local diffeomorphism on $z$  around  $0$,  $\tilde z=\psi(z)$, such that $\psi(0)=0$ and  $h(z)=  \frac{1}{2} \sum_{k=1}^{n} (\psi_k(z))^2$ (see, e.g., \cite{MilnorBook63}). Moreover, we can chose $\psi$  such that $\left.\dv{\psi}{z}\right|_{0}=\sqrt{-\left.\dvv{f}{z}\right|_{0}}$ is a positive definite symmetric matrix.

To summarize, there is a local  analytic diffeomorphism   $\Xi:~V\ni (x,z)\mapsto (\tilde x, \tilde z)\in \tilde V$ from an open connected  neighbourhood $V$ of $0$ to another open connected neighbourhood of $0$  such that
\begin{itemize}
  \item for all $(x,z)\in V$, we have $\phi(x,z) >0$ (resp. $<0$, $=0$) when $x>0$ (resp. $<0$, $=0$) .
  \item $\forall (x,z)\in V$, $f(x, z)= -\phi(x,z) -\frac{1}{2} \sum_{k=1}^n (\psi_k(z))^2$.
  \item $\det\left.\begin{pmatrix}
               \dv{\phi}{x} & \dv{\phi}{z} \\
               \dv{\psi}{x} & \dv{\psi}{z} \\
             \end{pmatrix}
             \right|_{(x,z)}
   = \left|\left.\dv{f}{x}\right|_{(0,0)}\right| \sqrt{\left| \det\left(\left.\dvv{f}{z}\right|_{(0,0)}\right) \right| }  ~(1+d(x,z))
   $ where  $d$ is analytic  on $V$ with $d(0,0)=0$.
\end{itemize}

Since $V$ is a neighbourhood of $0$, there exists a $\eta\in(0,1)$ such that  $\mathcal{C}_\eta=(0,\eta)\times (-\eta,\eta)^n \subset V$.  Moreover, there exists $c< f(0,0)$ such that, $\forall (x,z)\in\DD/\mathcal{C}_\eta$, $f(x,z)\leq c$. Since:
\begin{multline*}
 \mathcal{I}_g( \Nm ) = \int_{(x,z)\in\mathcal{C}_\eta}x^m g(x,z) e^{\Nm  f(x,z)} \,  \mathrm dx~\mathrm dz
 +\int_{(x,z)\in\DD/\mathcal{C}_\eta} x^m g(x,z) e^{\Nm  f(x,z)} \,  \mathrm dx~\mathrm dz
 \\
 =  e^{\Nm  f(0,0)}
 \left( \int_{(x,z)\in\mathcal{C}_\eta}x^m g(x,z) e^{\Nm  (f(x,z)-f(0,0))} \,  \mathrm dx~\mathrm dz +
 e^{-\Nm  (f(0,0)-c)} \int_{(x,z)\in\DD/\mathcal{C}_\eta} x^m g(x,z) e^{\Nm  (f(x,z)-c)} \,  \mathrm dx~\mathrm dz
 \right)
 \\
 =
  e^{\Nm  f(0,0)}
 \left( \int_{(x,z)\in\mathcal{C}_\eta}x^m g(x,z) e^{\Nm  (f(x,z)-f(0,0))} \,  \mathrm dx~\mathrm dz +
O\left( e^{-\Nm  (f(0,0)-c)}\right)
 \right)
\end{multline*}
we just have to consider the asymptotic expansion of:
$$
I_\eta(\Nm)= \int_{(x,z)\in\mathcal{C}_\eta}x^m g(x,z) e^{\Nm  (f(x,z)-f(0,0))} \,  \mathrm dx~\mathrm dz
$$
Since $\mathcal{C}_\eta\subset V$, we can consider the change of variable $(\tilde x, \tilde z)=\Xi(x,z)$ that yields:
$$
  I_\eta(\Nm)= \int_{(\tilde x,\tilde z)\in\Xi(\mathcal{C}_\eta)} \tilde x^m \tilde g(\tilde x,\tilde z) e^{ -\Nm \left( \tilde x + \frac{1}{2}\sum_{k=1}^{n} \tilde z_k^2 \right) } \,  \mathrm d\tilde x~\mathrm d\tilde z
$$
where:
$$
\tilde g(\tilde x,\tilde z)=  \frac{ g(\Xi^{-1}(\tilde x,\tilde z))  }{\big(f_1(\Xi^{-1}(\tilde x,\tilde z))\big)^m \left|\left.\dv{f}{x}\right|_{(0,0)}\right| \sqrt{\left| \det\left(\left.\dvv{f}{z}\right|_{(0,0)}\right) \right| } } ~(1+\tilde d(\tilde x,\tilde z)),
$$
and $\tilde d$ is an analytic  function with $\tilde d(0,0)=0$. Since, for all $(\tilde x,\tilde z)\in\Xi(\mathcal{C}_\eta)$ we have $\tilde x\geq 0$, there exists a $\tilde \eta >0$ such that $\widetilde{\mathcal{C}}_{\tilde \eta}=(0,\tilde \eta) \times (-\tilde \eta,\tilde \eta)^n \subset \Xi(\mathcal{C}_\eta)$. Thus, similarly to the passage from
$ \mathcal{I}_g( \Nm ) $ to $I_\eta(\Nm)$, we can just  consider, up to exponentially small terms versus $\Nm$, the asymptotic expansion of:
\begin{equation}\label{eq:Itilde}
\tilde I_{\tilde \eta} = \int_{(\tilde x,\tilde z) \in \widetilde{\mathcal{C}}_{\tilde \eta}}\tilde x^m \tilde g(\tilde x,\tilde z) e^{ -\Nm \left(\tilde x + \frac{1}{2}\sum_{k=1}^{n} \tilde z_k^2 \right) } \,  \mathrm d\tilde x~\mathrm d\tilde z
.
\end{equation}

When $g(0,0)\neq 0$, we have  $\tilde g(0,0) \neq 0$. Set $\tilde g(\tilde x,\tilde z)= \tilde g(0,0) +\tilde x \tilde g_1(\tilde x,\tilde z) +\sum_{k=1}^n \tilde z_k \tilde h_k(\tilde x,\tilde z) $ with $\tilde g_1$ and $\tilde h_k$ bounded analytic  functions  on $\widetilde{\mathcal{C}}_{\tilde \eta}$. We get:
\begin{multline*}
 \tilde I_{\tilde \eta} = \tilde g(0,0) \int_{(\tilde x,\tilde z) \in \widetilde{\mathcal{C}}_{\tilde \eta}}\tilde x^m  e^{ -\Nm \left( \tilde x + \frac{1}{2}\sum_{k=1}^{n} \tilde z_k^2 \right) } \,  \mathrm d\tilde x~\mathrm d\tilde z
 \\
 + \int_{(\tilde x,\tilde z) \in \widetilde{\mathcal{C}}_{\tilde \eta}}\tilde x^{m+1} \tilde g_1(\tilde x,\tilde z)   e^{ -\Nm \left(\tilde x + \frac{1}{2}\sum_{k=1}^{n} \tilde z_k^2 \right) } \,  \mathrm d\tilde x~\mathrm d\tilde z
 \\
 + \int_{(\tilde x,\tilde z) \in \widetilde{\mathcal{C}}_{\tilde \eta}}\tilde x^m \left( \sum_{k=1}^n \tilde z_k \tilde h_k(\tilde x,\tilde z) \right) e^{ -\Nm \left(\tilde x + \frac{1}{2}\sum_{k=1}^{n} \tilde z_k^2 \right) } \,  \mathrm d\tilde x~\mathrm d\tilde z
 .
\end{multline*}
Up to exponentially small terms versus $\Nm$,  the first integral in the right hand-side member can be replaced by:
$$
 \int_{(\tilde x,\tilde z) \in (0,+\infty) \times (-\infty,+\infty)^n} \tilde x^m e^{ -\Nm \left(\tilde x + \frac{1}{2}\sum_{k=1}^{n} \tilde z_k^2 \right) } \,  \mathrm d\tilde x~\mathrm d\tilde z
 = \frac{m!}{\Nm^{m+1}} \left(\frac{2\pi}{\Nm}\right)^{n/2}
 .
 $$
 For the second integral,  $m+1$  integrations by part versus $\tilde x$  are   necessary:
 \begin{multline*}
     \int_{(\tilde x,\tilde z) \in \widetilde{\mathcal{C}}_{\tilde \eta}} \tilde x^{m+1}  \tilde g_1(\tilde x,\tilde z) e^{ -\Nm \left( \tilde x + \frac{1}{2}\sum_{k=1}^{n} \tilde z_k^2 \right) } \,  \mathrm d\tilde x~\mathrm d\tilde z
     \\
     = \int_{\tilde z \in  (-\tilde\eta,\tilde\eta)^n}
     \left(\int_{0}^{\tilde \eta} \tilde x^{m+1} \tilde g_1(\tilde x,\tilde z)
     e^{ -\Nm  \tilde x}\,  \mathrm d\tilde x\right)
     e^{ - \frac{\Nm }{2}\sum_{k=1}^{n} \tilde z_k^2  } \,  \mathrm  d\tilde z,
 \end{multline*}
 where, via  $m+1$  integrations by part,  we get:
 $$
 \int_{0}^{\tilde \eta} \tilde x^{m+1} \tilde g_1(\tilde x,\tilde z)
     e^{ -\Nm  \tilde x}\,  \mathrm d\tilde x = \frac{1}{\Nm^{m+1}} \int_{0}^{\tilde \eta}  \tilde g_{m+2}(\tilde x,\tilde z)
     e^{ -\Nm  \tilde x}\,  \mathrm d\tilde x + O(e^{-\tilde \eta \Nm}/\Nm)
 $$
 with $\tilde g_{m+2} = \frac{\partial ^{m+1}}{\partial \tilde x^{m+1}} \big( \tilde x^{m+1} \tilde g_1(\tilde x,\tilde z)\big)$.
We get:
 \begin{multline*}
     \int_{(\tilde x,\tilde z) \in \widetilde{\mathcal{C}}_{\tilde \eta}} \tilde x^{m+1} \tilde g_1(\tilde x,\tilde z) e^{ -\Nm \left( \tilde x + \frac{1}{2}\sum_{k=1}^{n} \tilde z_k^2 \right) } \,  \mathrm d\tilde x~\mathrm d\tilde z
     \\
     = \frac{1}{\Nm^{m+1}} \int_{(\tilde x,\tilde z) \in \widetilde{\mathcal{C}}_{\tilde \eta}}
      \tilde g_{m+2}(\tilde x,\tilde z)
     e^{ -\Nm \left( \tilde x + \frac{1}{2}\sum_{k=1}^{n} \tilde z_k^2 \right) } \,  \mathrm d\tilde x~\mathrm d\tilde z
     + O(e^{-\tilde \eta \Nm}/\Nm)
     \\
     =  O\left( \frac{1}{\Nm^{m+n/2+2}} \right),
 \end{multline*}
 since $\int_{0}^{\tilde \eta}  \tilde g_{m+2}(\tilde x,\tilde z)
     e^{ -\Nm  \tilde x}\,  \mathrm d\tilde x$ is of order $1/\Nm$.

 Similarly, we get, with $m$ integration by part versus $\tilde x$,
 \begin{multline*}
     \int_{(\tilde x,\tilde z) \in \widetilde{\mathcal{C}}_{\tilde \eta}}\tilde x^{m}  \tilde z_k \tilde h_k(\tilde x,\tilde z) e^{ -\Nm \left(\tilde x + \frac{1}{2}\sum_{l=1}^{n} \tilde z_{l}^2 \right) } \,  \mathrm d\tilde x~\mathrm d\tilde z
     \\
     = \frac{1}{\Nm^{m}} \int_{(\tilde x,\tilde z) \in \widetilde{\mathcal{C}}_{\tilde \eta}}
     \tilde z_k
     \tilde q_{k,m}(\tilde x,\tilde z)
     e^{ -\Nm \left( \tilde x + \frac{1}{2}\sum_{l=1}^{n} \tilde z_{l}^2 \right) } \,  \mathrm d\tilde x~\mathrm d\tilde z
     + O(e^{-\tilde \eta \Nm}/\Nm)
          .
 \end{multline*}
 where $\tilde q_{k,m}(\tilde x,\tilde z)=  \frac{\partial ^{m}}{\partial \tilde x^{m}} \big( \tilde x^{m} \tilde h_k(\tilde x,\tilde z)  \big)$.
  A single integration by part versus $\tilde z_k$ yields:
 \begin{multline*}
     \int_{(\tilde x,\tilde z) \in \widetilde{\mathcal{C}}_{\tilde \eta}}  \tilde z_k \tilde q_{k,m}(\tilde x,\tilde z) e^{ -\Nm \left(\tilde x + \frac{1}{2}\sum_{l=1}^{n} \tilde z_{l}^2 \right) } \,  \mathrm d\tilde x~\mathrm d\tilde z
     \\
     = \frac{1}{\Nm} \int_{(\tilde x,\tilde z) \in \widetilde{\mathcal{C}}_{\tilde \eta}}
      \dv{\tilde q_{k,m}}{\tilde z_k}(\tilde x,\tilde z)
     e^{ -\Nm \left(\tilde x  + \frac{1}{2}\sum_{l=1}^{n} \tilde z_{l}^2 \right) } \,  \mathrm d\tilde x~\mathrm d\tilde z
     + O(e^{-\tilde \eta^2 \Nm/2}/\Nm)
     .
 \end{multline*}
     This implies that:
\begin{multline*}
     \int_{(\tilde x,\tilde z) \in \widetilde{\mathcal{C}}_{\tilde \eta}}\tilde x^{m}  \tilde z_k \tilde h_k(\tilde x,\tilde z) e^{ -\Nm \left(\tilde x + \frac{1}{2}\sum_{kl=1}^{n} \tilde z_{l}^2 \right) } \,  \mathrm d\tilde x~\mathrm d\tilde z
     \\
     = \frac{1}{\Nm^{m+1}} \int_{(\tilde x,\tilde z) \in \widetilde{\mathcal{C}}_{\tilde \eta}}
      \dv{\tilde r_{k,m}}{\tilde z_k}(\tilde x,\tilde z)
     e^{ -\Nm \left( \tilde x + \frac{1}{2}\sum_{l=1}^{n} \tilde z_{l}^2 \right) } \,  \mathrm d\tilde x~\mathrm d\tilde z
    +  O(e^{-\tilde \eta^2 \Nm/2}/\Nm)
    \\
        =  O\left( \frac{1}{\Nm^{m+n/2+2}} \right)
          .
 \end{multline*}
 Thus, we get~\eqref{case1:thm:boundary}, thanks to
 $\tilde I_{\tilde \eta} =   \frac{\tilde g(0,0) m!}{\Nm^{m+1}} \left(\frac{2\pi}{\Nm}\right)^{n/2} \left( 1 + O(1/\Nm)\right)$ and $\tilde g(0,0)=   \frac{ g(0,0)}{ \left( \left|\left.\dv{f}{x}\right|_{(0,0)}\right|\right)^{m+1}\sqrt{\left| \det\left(\left.\dvv{f}{z}\right|_{(0,0)}\right) \right| } } $

 Assume now that  $g(0,0)=0$, $\left.\dv{g}{x}\right|_{(0,0)}=0$ and $\left.\dv{g}{z}\right|_{(0,0)}=0$. Consider then the function  $\tilde g$ in ~\eqref{eq:Itilde}. We have  $\tilde g(0,0)=0$, $\left.\dv{\tilde g}{\tilde x}\right|_{(0,0)}=0$ and $\left.\dv{\tilde g}{\tilde z}\right|_{(0,0)}=0$. Moreover, denoting:
 $$
 \lambda_0=\sqrt{\left| \det\left(\left.\dvv{f}{z}\right|_{(0,0)}\right) \right|~ } \left( -\left.\dv{f}{x}\right|_{(0,0)} \right)^{m+1},
 $$
  we have:
 $$\lambda_0 \tilde g(\phi(x,z),\psi(z)) = g(x,z) ~(1+ e(x,z)),$$
  with $e$ an analytic   function with  $e(0,0)=0$. Similarly to~\eqref{eq:tracegInt}, we get:
 \begin{equation}\label{eq:traceg}
 \tr{\left.\dvv{\tilde g}{\tilde z}\right|_{(0,0)}} = \frac{\tr{-\left.\dvv{g}{z}\right|_{(0,0)} \left(\left.\dvv{f}{z}\right|_{(0,0)}\right)^{-1} }}{\sqrt{\left| \det\left(\left.\dvv{f}{z}\right|_{(0,0)}\right) \right|~ } \left( -\left.\dv{f}{x}\right|_{(0,0)} \right)^{m+1}}.
\end{equation}
Since $\tilde g$ and its first partial derivatives versus $\tilde x$ and $\tilde z_k$ vanish, we have:
$$
\tilde g(\tilde x,\tilde z)=
 \tilde x^2  \tilde a (\tilde x,\tilde z)
+ \sum_{k,k'=1}^n \tilde z_k \tilde z_{k'} \tilde b_{k,k'} (\tilde x,\tilde z)
+ \sum_{k=1}^n \tilde x \tilde z_{k} \tilde c_{k} (\tilde x,\tilde z)
$$
where the function $\tilde a$, $\tilde b_{k,k'}$ and $\tilde c_{k}$ are analytic . To evaluate the integral in~\eqref{eq:Itilde}, we have to consider  the dominant terms of three kinds of integrals:
\begin{align*}
  A= \int_{(\tilde x,\tilde z) \in \widetilde{\mathcal{C}}_{\tilde \eta}} \tilde x^{m+2} \tilde a (\tilde x,\tilde z)  e^{ -\Nm \left( \tilde x + \frac{1}{2}\sum_{l=1}^{n} \tilde z_l^2 \right) } \,  \mathrm d\tilde x~\mathrm d\tilde z,
  \\
    B_{k,k'}= \int_{(\tilde x,\tilde z) \in \widetilde{\mathcal{C}}_{\tilde \eta}} \tilde x^m  \tilde z_k \tilde z_{k'} \tilde b_{k,k'} (\tilde x,\tilde z)  e^{ -\Nm \left( \tilde x + \frac{1}{2}\sum_{l=1}^{n} \tilde z_l^2 \right) } \,  \mathrm d\tilde x~\mathrm d\tilde z,
  \\
  C_{k}= \int_{(\tilde x,\tilde z) \in \widetilde{\mathcal{C}}_{\tilde \eta}} \tilde x^{m+1}  \tilde z_{k} \tilde c_{k} (\tilde x,\tilde z)  e^{ -\Nm \left( \tilde x + \frac{1}{2}\sum_{l=1}^{n} \tilde z_l^2 \right) } \,  \mathrm d\tilde x~\mathrm d\tilde z
  .
\end{align*}
As done previously, $m+2$ integrations by part on $\tilde x$   yield  $A= O\left(\Nm^{-m-n/2-3}\right)$. As done previously, $m+1$ integrations by part versus $\tilde x$ and a single integration by part  versus $\tilde z_k $ provide
 $C_{k}= O\left(\Nm^{-m-n/2-3}\right)$. For $k\neq k'$, $m$ integrations by part versus $\tilde x$, one integration by part versus $\tilde z_k$ followed by another one versus $\tilde z_{k'}$, yield  similarly   to  $B_{k,k'}= O\left(\Nm^{-m-n/2-3}\right)$. For $k=k'$, we start with $m$ integrations by part versus $\tilde x$:

  \begin{multline*}
B_{k,k}= \int_{(\tilde x,\tilde z) \in \widetilde{\mathcal{C}}_{\tilde \eta}} \tilde x^m \tilde z_k^2 \tilde b_{k,k} (\tilde x,\tilde z)  e^{ -\Nm \left( \tilde x + \frac{1}{2}\sum_{l=1}^{n} \tilde z_l^2 \right) } \,  \mathrm d\tilde x~\mathrm d\tilde z
\\
      = \frac{1}{\Nm^{m}} \int_{(\tilde x,\tilde z) \in \widetilde{\mathcal{C}}_{\tilde \eta}}
     \tilde z_k^2 \tilde q_{k,m}(\tilde x,\tilde z)
     e^{ -\Nm \left( \tilde x + \frac{1}{2}\sum_{k=1}^{n} \tilde z_k^2 \right) } \,  \mathrm d\tilde x~\mathrm d\tilde z
     + O(e^{-\tilde \eta \Nm}/\Nm)
          .
 \end{multline*}
 where $\tilde q_{k,m}(\tilde x,\tilde z)=  \frac{\partial ^{m}}{\partial \tilde x^{m}} \big( \tilde x^{m}  \tilde b_{k,k} (\tilde x,\tilde z)  \big)$.
 We can notice that  $\tilde q_{k,m}(0)= m! \tilde b_{k,k}(0)$. A single integration by part versus $\tilde z_k$ yields to:
 \begin{multline*}
 \int_{(\tilde x,\tilde z) \in \widetilde{\mathcal{C}}_{\tilde \eta}}  \tilde z_k^2 \tilde q_{k,m} (\tilde x,\tilde z)  e^{ -\Nm \left( \tilde x + \frac{1}{2}\sum_{l=1}^{n} \tilde z_l^2 \right) } \,  \mathrm d\tilde x~\mathrm d\tilde z
\\
= \frac{1}{\Nm} \int_{(\tilde x,\tilde z) \in \widetilde{\mathcal{C}}_{\tilde \eta}}\left( \tilde q_{k,m} (\tilde x,\tilde z)+ \tilde z_k \dv{\tilde q_{k,m}}{\tilde z_k} (\tilde x,\tilde z)\right)  e^{ -\Nm \left(\tilde x + \frac{1}{2}\sum_{l=1}^{n} \tilde z_l^2 \right) } \,  \mathrm d\tilde x~\mathrm d\tilde z + O(e^{-\Nm\tilde \eta^2/2})
\\
=
 \tilde q_{k,m}(0)\frac{1}{\Nm^{2}} \left(\frac{2\pi}{\Nm}\right)^{n/2}
+ O\left(\Nm^{-n/2-3}\right)
.
 \end{multline*}
With $\tilde q_{k,m}(0)= m! \tilde b_{k,k}(0)$, the sum $ A + \sum _k C_k + \sum_{k,k'} B_{k,k'}$  corresponding  the integral in~\eqref{eq:Itilde} reads:
$$
\tilde I_{\tilde \eta}(\Nm) = \frac{\sum_{k=1}^{n} m! \tilde b_{k,k}(0)}{\Nm^{m+2}} \left(\frac{2\pi}{\Nm}\right)^{n/2}
+ O\left(\Nm^{-m-n/2-3}\right)
.
$$
Since up to exponentially small terms,  $ \tilde I_{\tilde \eta}$ and $e^{-\Nm f(0)} \mathcal{I}_g( \Nm ) $ coincide, we  get~\eqref{case2:thm:boundary} using~\eqref{eq:traceg},  since
$
\sum_{k=1}^{n} \tilde b_{k,k}(0)=\tfrac{1}{2} \tr{\left.\dvv{\tilde g}{\tilde z}\right|_{0}}
$.  
\end{proof}

The asymptotic expansions of theorems~\ref{thm:interior} and~\ref{sec:boundary} yield  directly the following approximations of the  Bayesian mean  and variance.

\begin{cor} \label{cor:BayesianMeanVariance}
Consider the analytic  function $f(z)$ of theorem~\ref{thm:interior}.    Then we have the following asymptotic for any analytic  function $g(z)$:
\begin{equation}
\mathcal{M}_g( \Nm )\triangleq \frac{ \int_{z\in(-1,1)^{n}} g(z) \exp\left( \Nm  f(z)\right) \mathrm dz}{ \int_{z\in(-1,1)^{n}}  \exp\left( \Nm  f(z)\right) \mathrm dz} = g(0) +   O(\Nm^{-1}) \label{eq:AsymptoticMeanInterior}
\end{equation}
We have also:
\begin{multline} \label{eq:AsymptoticVarianceInterior}
  \mathcal{V}_g( \Nm )\triangleq \frac{ \int_{z\in(-1,1)^{n}}  \Big(g(z)-\mathcal{M}_g( \Nm )\Big)^2 \exp\left( \Nm  f(z)\right) ~\mathrm dz}{ \int_{z\in(-1,1)^{n}}   \exp\left( \Nm  f(z)\right) \, \mathrm  dz}
  \\ = \frac{\tr{-\left.\dvv{g}{z}\right|_{0} \left(\left.\dvv{f}{z}\right|_{0}\right)^{-1} }}{2 \Nm} + O\big(\Nm^{-2}\big)
  .
  \end{multline}

Consider the analytic function $f(x,z)$ of theorem~\ref{thm:boundary}.    Then, we have the following asymptotic for any analytic  function $g(x,z)$:
\begin{equation}
\mathcal{M}_g( \Nm )\triangleq \frac{ \int_{x\in (0,1)} \int_{z\in(-1,1)^{n}} x^m g(x,z) \exp\left( \Nm  f(x,z)\right) \,  \mathrm dx~\mathrm dz}{\int_{x\in (0,1)} \int_{z\in(-1,1)^{n}} x^m  \exp\left( \Nm  f(x,z)\right) \,  \mathrm dx~\mathrm dz} = g(0,0) +   O(\Nm^{-1}) \label{eq:AsymptoticMean}
\end{equation}
We have also:
\begin{multline} \label{eq:AsymptoticVariance}
  \mathcal{V}_g( \Nm )\triangleq \frac{ \int_{x\in (0,1)} \int_{z\in(-1,1)^{n}}x^m  \Big(g(x,z)-\mathcal{M}_g( \Nm )\Big)^2 \exp\left( \Nm  f(x,z)\right) \,  \mathrm dx~\mathrm dz}{\int_{x\in (0,1) } \int_{z\in(-1,1)^{n}} x^m  \exp\left( \Nm  f(x,z)\right) \,  \mathrm dx~\mathrm dz}
  \\ = \frac{\tr{-\left.\dvv{g}{z}\right|_{(0,0)} \left(\left.\dvv{f}{z}\right|_{(0,0)}\right)^{-1} }}{2 \Nm} + O\big(\Nm^{-2}\big)
  .
  \end{multline}
\end{cor}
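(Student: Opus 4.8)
The plan is to read both $\mathcal{M}_g(\Nm)$ and $\mathcal{V}_g(\Nm)$ as quotients of two Laplace integrals of the type~\eqref{eq:IgLocIn} (resp.~\eqref{eq:IgLoc}), to apply Theorem~\ref{thm:interior} (resp.~Theorem~\ref{thm:boundary}) to the numerator and the denominator separately, and then to divide. Each expansion carries the common prefactor $e^{\Nm f(0)}(2\pi)^{n/2}\Nm^{-n/2}\big/\sqrt{\left|\det\left(\left.\dvv{f}{z}\right|_{0}\right)\right|}$ (in the boundary case with $f(0)$, $0$ replaced by $f(0,0)$, $(0,0)$, and an extra factor $m!\,\Nm^{-m-1}\big/\left(-\left.\dv{f}{x}\right|_{(0,0)}\right)^{m+1}$); these common factors cancel in the quotient, so that only the $g$-dependent scalar coefficients survive. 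The structural observation driving the whole argument is that the mean is fixed by the \emph{leading} coefficient of each expansion, whereas the variance is fixed by the \emph{subleading} one, because the leading contributions to numerator and denominator cancel against each other.

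For the mean I would apply the case $g(0)\neq0$, i.e.~\eqref{case1:thm:interior} (resp.~\eqref{case1:thm:boundary}), to the denominator, whose integrand is the constant $1$ with value $1\neq0$ at the maximum. Writing $g=g(0)+\big(g-g(0)\big)$, the numerator equals $g(0)$ times this same denominator plus the integral of $g-g(0)$ against $e^{\Nm f}$. Since $g-g(0)$ vanishes at the maximum, the single integration by parts performed inside the proof of Theorem~\ref{thm:interior} shows this last integral to be smaller by one power of $\Nm$. After division the prefactor cancels and I obtain $\mathcal{M}_g(\Nm)=g(0)+O(\Nm^{-1})$, which is~\eqref{eq:AsymptoticMeanInterior}; the boundary identity~\eqref{eq:AsymptoticMean} follows identically, the weight $x^m$, the factor $m!$, the power $\left(-\left.\dv{f}{x}\right|_{(0,0)}\right)^{m+1}$ and the determinant all cancelling between numerator and denominator.

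For the variance I would use the shift identity $\mathcal{V}_g(\Nm)=\mathcal{M}_{(g-g(0))^2}(\Nm)-\big(\mathcal{M}_g(\Nm)-g(0)\big)^2$, valid because $\mathcal{V}_g$ is unchanged when a constant is added to $g$. The function $(g-g(0))^2$ vanishes together with its first derivatives at the maximum, so it meets \emph{exactly} the hypotheses of the degenerate case $g(0)=0,\ \left.\dv{g}{z}\right|_{0}=0$ of the theorems. Consequently~\eqref{case2:thm:interior} (resp.~\eqref{case2:thm:boundary}) governs the numerator $\int (g-g(0))^2\,e^{\Nm f}$, while~\eqref{case1:thm:interior} (resp.~\eqref{case1:thm:boundary}) with integrand $1$ governs the denominator; in the boundary case only the second-order $z$-derivatives enter, consistently with the single Hessian $\left.\dvv{g}{z}\right|_{(0,0)}$ present in~\eqref{case2:thm:boundary}. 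Dividing, the common prefactor cancels once more and reproduces the $O(\Nm^{-1})$ leading term of~\eqref{eq:AsymptoticVarianceInterior} (resp.~\eqref{eq:AsymptoticVariance}), while the subtracted square $\big(\mathcal{M}_g(\Nm)-g(0)\big)^2=O(\Nm^{-2})$ by the mean estimate and is absorbed into the remainder.

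The main obstacle is conceptual rather than computational: one must notice that the leading $\Nm^{-n/2}$ (resp.~$\Nm^{-m-n/2-1}$) contributions of the numerator and denominator of $\mathcal{V}_g(\Nm)$ cancel, so that the $\Nm^{-1}$ behaviour is controlled by the \emph{second} term of the Laplace expansion; this is precisely why the centered square $(g-g(0))^2$, tailored to fall into the degenerate case~\eqref{case2:thm:interior}, is the correct integrand to feed into the theorems rather than $g$ itself. The remaining difficulty is purely a matter of remainder bookkeeping: I must check that centering at the deterministic value $g(0)$ instead of the $\Nm$-dependent $\mathcal{M}_g(\Nm)$ costs only the $O(\Nm^{-2})$ term isolated above, and, for the mean, that an integrand vanishing at the maximum genuinely contributes one order below the leading term — a fact already supplied by the integration-by-parts steps in the proofs of Theorems~\ref{thm:interior} and~\ref{thm:boundary}.
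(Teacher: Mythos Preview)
Your strategy is exactly what the paper intends: the corollary is stated to follow ``directly'' from Theorems~\ref{thm:interior} and~\ref{thm:boundary}, with no separate argument, so dividing the two Laplace expansions and invoking the shift identity $\mathcal{V}_g=\mathcal{M}_{(g-g(0))^2}-\big(\mathcal{M}_g-g(0)\big)^2$ is the natural and correct way to fill in the details. The handling of the mean, and of the $O(\Nm^{-2})$ remainder coming from $\big(\mathcal{M}_g-g(0)\big)^2$, is fine.

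One point, however, you glossed over and should make explicit. When you feed $(g-g(0))^2$ into the degenerate case~\eqref{case2:thm:interior} or~\eqref{case2:thm:boundary}, the Hessian that enters the trace is
\[
\left.\dvv{\big(g-g(0)\big)^2}{z}\right|_{0}
= 2\left.\dv{g}{z}\right|_{0}\left(\left.\dv{g}{z}\right|_{0}\right)^{\!T},
\]
a rank-one matrix built from the \emph{gradient} of $g$, and not $\left.\dvv{g}{z}\right|_{0}$ as literally printed in~\eqref{eq:AsymptoticVarianceInterior} and~\eqref{eq:AsymptoticVariance}. A one-line sanity check shows the two differ: with $n=1$, $f(z)=-z^2/2$ and $g(z)=z$, the true variance is $1/\Nm$ while the printed formula gives~$0$. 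What your computation actually yields is the (correct) delta-method expression
\[
\mathcal{V}_g(\Nm)=\frac{1}{\Nm}\,\left(\left.\dv{g}{z}\right|_{0}\right)^{\!T}\left(-\left.\dvv{f}{z}\right|_{0}\right)^{-1}\left.\dv{g}{z}\right|_{0}+O\big(\Nm^{-2}\big),
\]
and this is precisely how the paper itself \emph{uses} the corollary in the proof of Theorem~\ref{thm:Qtomo2}, where the function whose $z$-Hessian enters the trace is $h=\big(\tr{(\rho-\brho)A}\big)^2$, i.e.\ the centered square. So your argument is sound; rather than claiming to ``reproduce'' \eqref{eq:AsymptoticVarianceInterior} verbatim, you should note that the symbol $\left.\dvv{g}{z}\right|_{0}$ in the corollary is to be read as the Hessian of $(g-g(0))^2$.
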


During the proof of theorem~\ref{thm:interior}, we have proved during the passage from $z$ yo $\tilde z$ coordinates the following lemma.
\begin{lem} \label{lem:trace}
Take two $C^2$ real-value  functions $f$ and $g$ of  $z\in\mathbb{R}^n$. Assume that $0$ is a  regular critical point of $f$  and just a critical  point of $g$.
Take any $C^2$  diffeomorphism $\phi$ defined locally around $0$:   $\tilde z = \phi(z)$. Then:
$$
\tr{-\left.\dvv{g}{z}\right|_{0} \left(\left.\dvv{f}{z}\right|_{0}\right)^{-1} }
=  \tr{-\left.\dvv{\tilde g}{\tilde z}\right|_{\phi(0)} \left(\left.\dvv{\tilde f}{\tilde z}\right|_{\phi(0)}\right)^{-1} }
$$
where $\tilde f(\phi(z))= f(z)$ and $\tilde g(\phi(z))=g(z)$.
\end{lem}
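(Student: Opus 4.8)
\section*{Proof proposal for Lemma~\ref{lem:trace}}

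The plan is to reduce the statement to the congruence transformation law for the Hessian of a function at a critical point, and then to finish with a single cyclic permutation inside the trace. Write $J=\left.\dv{\phi}{z}\right|_{0}$ for the Jacobian of $\phi$ at $0$. Since $\phi$ is a $C^2$ diffeomorphism, $J$ is invertible; and since $0$ is a \emph{regular} critical point of $f$, the matrix $\left.\dvv{f}{z}\right|_{0}$ is invertible as well, so that both expressions in the statement are well defined.

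First I would record the second-order chain rule. Differentiating $f(z)=\tilde f(\phi(z))$ twice gives, for all $i,j$,
\begin{equation*}
\frac{\partial^2 f}{\partial z_i\partial z_j}
= \sum_{k,l}\frac{\partial^2 \tilde f}{\partial \tilde z_k\partial \tilde z_l}\,\dv{\phi_k}{z_i}\,\dv{\phi_l}{z_j}
+ \sum_k \dv{\tilde f}{\tilde z_k}\,\frac{\partial^2 \phi_k}{\partial z_i\partial z_j}.
\end{equation*}
The crucial observation is that at $z=0$ the first-order derivatives $\left.\dv{\tilde f}{\tilde z_k}\right|_{\phi(0)}$ all vanish: because $0$ is a critical point of $f$, the first-order chain rule together with the invertibility of $J$ shows that $\phi(0)$ is a critical point of $\tilde f$. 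Hence the non-tensorial second term drops out, leaving the congruence relation
\begin{equation*}
\left.\dvv{f}{z}\right|_{0} = J^{\top}\left.\dvv{\tilde f}{\tilde z}\right|_{\phi(0)} J,
\end{equation*}
and, by the identical argument applied to $g$ (which is only assumed to have a critical point at $0$, precisely what is needed to kill its first-order term),
\begin{equation*}
\left.\dvv{g}{z}\right|_{0} = J^{\top}\left.\dvv{\tilde g}{\tilde z}\right|_{\phi(0)} J.
\end{equation*}

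Then I would substitute these two identities into the product $-\left.\dvv{g}{z}\right|_{0}\big(\left.\dvv{f}{z}\right|_{0}\big)^{-1}$. Using $\big(J^{\top}AJ\big)^{-1}=J^{-1}A^{-1}(J^{\top})^{-1}$ and cancelling the inner factor $J\,J^{-1}=I$, this product equals $-J^{\top}\left.\dvv{\tilde g}{\tilde z}\right|_{\phi(0)}\big(\left.\dvv{\tilde f}{\tilde z}\right|_{\phi(0)}\big)^{-1}(J^{\top})^{-1}$. Taking the trace and using its cyclic invariance to move $(J^{\top})^{-1}$ from the right to the left annihilates the conjugating pair $J^{\top}$ and $(J^{\top})^{-1}$, which yields exactly $\tr{-\left.\dvv{\tilde g}{\tilde z}\right|_{\phi(0)} \big(\left.\dvv{\tilde f}{\tilde z}\right|_{\phi(0)}\big)^{-1}}$, the claimed right-hand side.

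I do not expect any genuine obstacle: the content of the lemma is that $\tr{-\Hess g\,(\Hess f)^{-1}}$ is a scalar invariant attached to a pair of functions sharing a common regular critical point, and the whole argument hinges on the vanishing of the first-derivative term in the second-order chain rule at that critical point. The only point demanding care is the bookkeeping of the transposes, so that the two congruence factors cancel correctly under the trace.
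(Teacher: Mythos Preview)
Your proof is correct and follows essentially the same approach as the paper. The paper does not give a separate proof of the lemma but points back to the computation in the proof of Theorem~\ref{thm:interior}, where the congruence relation $\left.\dvv{g}{z}\right|_{0}=J^{\top}\left.\dvv{\tilde g}{\tilde z}\right|_{\phi(0)}J$ is derived via the second-order chain rule at the common critical point and then combined with the trace; your write-up makes this argument explicit and cleanly general for an arbitrary $C^2$ diffeomorphism.
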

This lemma just says that the above trace formula  is coordinate-free, i.e.,  independent of the local coordinates chosen to compute the Hessian of $f$ and $g$ at their common critical point.

\section{Application to quantum state tomography} \label{sec:Qtomo}
As explained in~\cite{SixPRA2016}, the parameter $p$ to estimate  corresponds to a density operator $\rho$ (quantum state),  a square matrix with complex entries and  belonging to the convex compact set $\DD$ formed by Hermitian $d\times d$ non-negative matrices of trace one.  Then, the  log-likelihood function  admits  the following structure:
\begin{equation}\label{eq:f}
  f(\rho) =  \sum_{\mu \in \mathcal{M}} \log\left( \tr{\rho Y_\mu} \right)
\end{equation}
where the set $\mathcal{M}$ is finite and  each measurement data  $Y_\mu$ belongs also to $\DD$.  For any Hermitian $d\times d$ matrix $A$, (a quantum observable) we are interested to provide an approximation  of Bayesian estimate of  $\tr{\rho A}$,
\begin{equation}\label{eq:Amean}
I_A(\Nm)= \frac{\int_{\DD} \tr{\rho A} e^{\Nm f(\rho)}~ \mathbb{P}_0(\rho) ~\mathrm d\rho}{\int_{\DD} e^{\Nm f(\rho)} ~\mathbb{P}_0(\rho)~\mathrm d\rho}
,
\end{equation}
and of  the  Bayesian variance:
\begin{equation}\label{eq:Avar}
V_A(\Nm)= \frac{\int_{\DD} \Big(\tr{\rho A}-I_A(\Nm)\Big)^2  e^{\Nm f(\rho)}~ \mathbb{P}_0(\rho) ~\mathrm d\rho}{\int_{\DD} e^{\Nm f(\rho)} ~\mathbb{P}_0(\rho)~\mathrm d\rho}
.
\end{equation}
Here above $\mathrm d\rho$ stands for the standard Euclidian volume element on $\DD$,  derived from the Frobenius product between $n\times n$ Hermitian matrices, and $\mathbb{P}_0 >0$ is a probability density  on $\rho$ prior to the measurement data  $(Y_\mu)$.  Since the  number of real parameters to describe  $\rho$ is large in general, it is difficult to compute these integrals even numerically via Monte-Carlo method.

The following lemma  provides  a unitary invariance  characterization  of  any $\bar\rho$ argument of the maximum of $f$ on $\DD$.
   \begin{lem}\label{lem:Qtomo1}
Assume that the  $d\times d$ Hermitian matrix  $\brho$ is an argument of the maximum of $f: \DD\ni\rho\mapsto f(\rho) \in [-\infty,0]$ defined in~\eqref{eq:f} over $\DD$ (the set of density operators). Then necessarily, $\brho$ satisfies the following condition:
\begin{itemize}
  \item $\tr{\brho Y_\mu} >0$ for each $\mu\in\mathcal{M}$;
  \item $\left[ \brho ~, ~\left.\nabla f\right|_{\brho}\right] = \brho \cdot  \left.\nabla f\right|_{\brho} -\left.\nabla f\right|_{\brho} \cdot \brho= 0$, where $\left.\nabla f\right|_{\brho}=  \sum_{\mu \in \mathcal{M}} \frac{Y_\mu}{\tr{\brho Y_\mu}}$ is the gradient of $f$ at $\brho$ for the Frobenius scalar product;

  \item there exists $\bar\lambda >  0$ such that
$\blambda \bP = \bP  \left.\nabla f\right|_{\brho}$ and $ \left.\nabla f\right|_{\brho} \leq \blambda I $,
where $\bP$ is the orthogonal projector on the range of $\brho$ and $I$ is the identity operator.
\end{itemize}
These conditions are also sufficient and characterize the unique maximum  when, additionally,  the vector space spanned by the $Y_\mu$'s coincides with the set of Hermitian matrices.
\end{lem}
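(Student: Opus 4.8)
The plan is to treat this as a concave maximisation over the convex compact set $\DD$ and to read the three conditions off a first-order variational inequality, with the scalar $\blambda$ emerging naturally as the largest admissible eigenvalue of the gradient. First I would record the two structural facts that drive everything: that $f$ is concave, since each $\rho\mapsto\log\tr{\rho Y_\mu}$ is the composition of the concave increasing $\log$ with an affine map of $\rho$; and that $f(\rho)$ is finite exactly when $\tr{\rho Y_\mu}>0$ for all $\mu$. Because $f(I/d)=-\#\mathcal{M}\,\log d>-\infty$, any maximiser $\brho$ satisfies $f(\brho)>-\infty$, which is precisely the first bullet $\tr{\brho Y_\mu}>0$; this in turn makes $f$ analytic near $\brho$ on the affine constraint space, with Hermitian gradient $\left.\nabla f\right|_{\brho}=\sum_\mu Y_\mu/\tr{\brho Y_\mu}$.

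For necessity I would exploit convexity of $\DD$: for any $\sigma\in\DD$ the segment $\brho+t(\sigma-\brho)$ stays in $\DD$, and is finite and smooth for small $t\ge0$, so maximality forces the one-sided directional derivative $\tr{\left.\nabla f\right|_{\brho}(\sigma-\brho)}\le0$, i.e. $\tr{\left.\nabla f\right|_{\brho}\sigma}\le\tr{\left.\nabla f\right|_{\brho}\brho}$ for every density matrix $\sigma$. The right-hand side computes to $\blambda:=\tr{\brho\left.\nabla f\right|_{\brho}}=\sum_\mu \tr{\brho Y_\mu}/\tr{\brho Y_\mu}=\#\mathcal{M}>0$, which fixes the multiplier and its positivity. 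Since $\max_{\sigma\in\DD}\tr{M\sigma}$ equals the top eigenvalue of a Hermitian $M$ (attained at a projector onto a leading eigenvector), the bound $\tr{\left.\nabla f\right|_{\brho}\sigma}\le\blambda$ for all $\sigma\in\DD$ is equivalent to $\left.\nabla f\right|_{\brho}\preceq\blambda I$, the second part of the third bullet. The equality case $\sigma=\brho$ gives $\tr{(\blambda I-\left.\nabla f\right|_{\brho})\brho}=0$ with both factors positive semidefinite; the crucial algebraic step is that a vanishing trace of a product of two positive semidefinite matrices forces the product to vanish, so $\left.\nabla f\right|_{\brho}\brho=\blambda\brho$. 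Taking the Hermitian adjoint yields $\brho\left.\nabla f\right|_{\brho}=\blambda\brho$, hence $[\brho,\left.\nabla f\right|_{\brho}]=0$ (second bullet), and, since $\bP$ projects onto $\mathrm{range}(\brho)$ where $\brho$ is invertible, $\bP\left.\nabla f\right|_{\brho}=\blambda\bP$ (first part of the third bullet).

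For sufficiency I would run the argument in reverse, using concavity to upgrade the variational inequality from necessary to sufficient for a global maximum. Given the three conditions, $\left.\nabla f\right|_{\brho}\preceq\blambda I$ together with $\sigma\succeq0$ gives $\tr{\left.\nabla f\right|_{\brho}\sigma}\le\blambda$, while $\bP\left.\nabla f\right|_{\brho}=\blambda\bP$ and $\bP\brho=\brho$ give $\tr{\left.\nabla f\right|_{\brho}\brho}=\blambda$; subtracting recovers $\tr{\left.\nabla f\right|_{\brho}(\sigma-\brho)}\le0$ for all $\sigma\in\DD$, and concavity promotes this to $f(\sigma)\le f(\brho)$. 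For the uniqueness claim under the spanning hypothesis I would compute the second directional derivative $\left.\Hess f\right|_{\rho}(H,H)=-\sum_\mu (\tr{HY_\mu})^2/(\tr{\rho Y_\mu})^2$: when the $Y_\mu$ span all Hermitian matrices, $\tr{HY_\mu}=0$ for every $\mu$ forces $H=0$, so $f$ is strictly concave on its domain and has at most one maximiser, while existence follows from upper semicontinuity of $f$ on the compact $\DD$. The three conditions then characterise this unique point.

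I expect the only genuinely delicate point to be the boundary behaviour encoded in complementary slackness, namely the passage from $\tr{(\blambda I-\left.\nabla f\right|_{\brho})\brho}=0$ to the operator identity $\left.\nabla f\right|_{\brho}\brho=\blambda\brho$, which is exactly what replaces naive stationarity when $\brho$ is rank-deficient. The accompanying subtlety is justifying that maximality legitimately yields the one-sided directional-derivative inequality at a point where $\brho$ may sit on the boundary of $\DD$; this is clean only because the first bullet keeps $\brho$ in the interior of the domain of finiteness of $f$. Everything else is bookkeeping with Hermitian matrices and the Frobenius pairing.
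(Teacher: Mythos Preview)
Your argument is correct and follows a genuinely different path from the paper's. For necessity, the paper first obtains the commutation $[\brho,\left.\nabla f\right|_{\brho}]=0$ by testing the optimality inequality against unitary conjugates $e^{-iH}\brho\,e^{iH}$, and then extracts the eigenvalue structure of $\left.\nabla f\right|_{\brho}$ by passing to a diagonalising frame and probing with diagonal perturbations. You instead compute $\blambda=\tr{\brho\left.\nabla f\right|_{\brho}}=\#\mathcal{M}$ directly (a fact the paper does not record), read the operator bound $\left.\nabla f\right|_{\brho}\preceq\blambda I$ off the variational inequality in one step, and then obtain both commutation and $\bP\left.\nabla f\right|_{\brho}=\blambda\bP$ simultaneously from the complementary\nobreakdash-slackness identity $(\blambda I-\left.\nabla f\right|_{\brho})\brho=0$; this is the clean KKT reading of the problem. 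For sufficiency, the paper parametrises a neighbourhood of $\brho$ by $(H,D)\mapsto e^{-iH}(\overline{\Delta}+D)e^{iH}$ and expands $f$ to second order to show a strict local maximum, whereas you invoke the supergradient inequality $f(\sigma)\le f(\brho)+\tr{\left.\nabla f\right|_{\brho}(\sigma-\brho)}$ to get global optimality immediately, using the spanning hypothesis only for uniqueness via strict concavity. Your route is shorter and more conceptual for the lemma in isolation; the paper's second\nobreakdash-order expansion, however, is not wasted effort, since it already isolates the restricted Hessian that reappears as $\bF$ in Theorem~\ref{thm:Qtomo2}.
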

\begin{proof} Since $f$ is a concave function of $\rho$, we can use the  standard optimality criterion for  a convex  optimization problem (see, e.g., \cite[section 4.2.3]{BoydBook2009}):   $\brho$   maximizes $f$  over the convex compact  set $\DD$, if and only if,  $\rho\in\DD$, $\tr{(\rho-\brho) \left.\nabla f\right|_{\brho}} \leq 0$.

   Assume that $f(\brho)$ is maximum. Since $f(I) > -\infty$, for each $\mu$ we have  $\tr{\brho Y_\mu} >0$.
   Take $\rho=e^{-i H} \brho e^ {iH}$, where $H$ is an arbitrary Hermitian operator. We have:
   $$
   \tr{e^{-i H} \brho e^ {iH} \left.\nabla f\right|_{\brho}} \leq \tr{\brho  \left.\nabla f\right|_{\brho}} .
   $$
   For $H$ close to zero, we have via the Baker-Campbell-Hausdorff formula, $e^{-i H} \brho e^ {iH} = \brho - i[H,\brho] + O(\tr{H^2})$.
   The above inequality implies that for all $H$ small enough, $\tr{[H,\brho] \left.\nabla f\right|_{\brho}} =\tr{H \left[\brho,\left.\nabla f\right|_{\brho}\right]}=0$ and thus
   $\brho$ and $\left.\nabla f\right|_{\brho}$ commute.

   Consider the spectral decomposition $\brho= U \overline{\Delta} U^\dag$ where $U$ is unitary and $\overline{\Delta}$ diagonal with entries $0\leq \overline{\Delta}_1\leq \overline{\Delta}_2 \leq \ldots\leq \overline{\Delta}_d\leq 1$. Since $\brho$ and $\left.\nabla f\right|_{\brho}$ commute, we have also $\left.\nabla f\right|_{\brho}= U \overline{\Lambda} U^\dag$ with $\overline{\Lambda}$ diagonal  with entries $(\overline{\Lambda}_k)$ Since $\nabla f$ is non negative, these entries are non-negative too. Take  $\rho= U  \Delta U^\dag$ where $\Delta$ is any diagonal matrix with  non negative entries and of trace one. We have:
   $$
   \tr{(\rho-\brho) \left.\nabla f\right|_{\brho}} = \tr{(\Delta-\overline{\Delta})\overline{\Lambda}} \leq 0.
   $$
   This means that, for any $(\Delta_1,\ldots, \Delta_d)\in[0,1]^d$ such that $\sum_{k=1}^{d} \Delta_k=1$ we have:
   $$
   \sum_{k=1}^{d} (\Delta_k -\overline{\Delta}_k) \overline{\Lambda}_k  \leq 0
   .
   $$
   Take $\epsilon >0$, $(k_1,k_2)\in\{1,\ldots,d\}^2$ such that $\overline{\Delta}_{k_1} >0$ and  $k_2\neq k_1$. For $k\in\{1,\ldots,d-1\}/\{k_1,k_2\}$ set   $\Delta_k= \overline{\Delta}_k$ and take
   $\Delta_{k_1}= \overline{\Delta}_{k_1}- \epsilon$ with  $\Delta_{k_2}= \overline{\Delta}_{k_2}+\epsilon$. By construction $\tr{\Delta}=1$ and,  for $\epsilon>0 $ small enough,  $\Delta_k\geq 0$ for all $k\in\{1,\ldots,d\}$. The previous  inequality implies that:
   $$
 \forall (k_1,k_2)\in\{1,\ldots,d\}^2 \text{ such that } \overline{\Delta}_{k_1} >0 \text{ and }k_1\neq k_2, \quad
 \overline{\Lambda}_{k_2}\leq \overline{\Lambda}_{k_1}
 .
   $$
 Thus for all $k_1,k_2$ such that $\overline{\Delta}_{k_1}>0$ and $\overline{\Delta}_{k_2}>0$,   $ \overline{\Lambda}_{k_1}=\overline{\Lambda}_{k_2}=\overline{\lambda} \geq 0$.  For  $k_1,k_2$ such that $\overline{\Delta}_{k_1}>0$ and $\overline{\Delta}_{k_2}=0$,   we have also $ \overline{\Lambda}_{k_2} \leq  \overline{\Lambda}_{k_1} =\overline{\lambda}$.  Thus we get
 $ \overline{\Lambda} \leq \overline{\lambda} I $. With $\overline{\Theta}$ the diagonal matrix of entries $\overline{\Theta}_k=0$ (resp. $=1$)  when $\overline{\Delta}_k=0$ (resp. $>0$), we have $\bP= U \overline{\Theta} U^\dag$ we get
 $ \overline{\lambda} \bP = \bP \left.\nabla f\right|_{\brho} $. Since $\left.\nabla f\right|_{\brho}$ is non negative and cannot be zero,  we have $\overline{\lambda} >0$.

Take   $\brho$  satisfying the conditions of lemma~\ref{lem:Qtomo1}.  Since they are  unitary invariant, we can assume that $\brho$ and $\left.\nabla f\right|_{\brho}$ are diagonal operators $\overline{\Delta}$ and  $\overline{\Lambda}$. Since we are in the convex situation, it is enough to prove that $\brho$ is a local maximum. Any local variation of $\rho$ around $\brho$ and remaining inside $\DD$ is  parameterized   via the following mapping:
 $$
 (H,D) \mapsto e^{-i H}( \overline{\Delta} + D) e^{i H}=\rho_{H,D}
 $$
 where $H$ is any  Hermitian matrix and $D$ is any  diagonal matrix of zero trace such that $\overline{\Delta} + D\geq 0$.
 We have the following expansion for $H$ and $D$ around zero:
 $$
 \rho_{H,D}=\overline{\Delta} + D -i[H,\overline{\Delta}] - i[H,D] -\tfrac{1}{2} [H,[H,\overline{\Delta}]] + O(\tr{H^3+D^3})
 .
 $$
 This yields to the following second order expansion of $(H,D)\mapsto f(\rho_{H,D})$ around zero:
 \begin{multline*}
   f(\rho_{H,D}) = f(\brho) + \tr{\overline{\Lambda} \left(D -i[H,\overline{\Delta}] - i[H,D] -\tfrac{1}{2} [H,[H,\overline{\Delta}]]\right)}
   \\-  \sum_{\mu\in\mathcal{M}}
  \frac{\trr{( \rho_{H,D}-\brho)Y_\mu}}{2 \trr{\brho Y_\mu}} + O(\| \rho_{H,D}-\brho\|^3)
   .
 \end{multline*}
  By assumptions, $\overline{\Lambda}$,  $\overline{\Delta} $ and $D$ are diagonal. Thus  $\tr{\overline{\Lambda} \left(-i[H,\overline{\Delta}] - i[H,D] \right)}=0$.
  Some elementary arguments exploiting $\blambda \overline{\Theta} \leq  \overline{\Delta} \leq \blambda I$,  show that $ \tr{\overline{\Lambda}D}\leq 0$ since $D$ is such that $\overline{\Delta}+D$ is nonnegative and of  trace one.
  We also have:
  $$
 - \tr{\overline{\Lambda} \left( [H,[H,\overline{\Delta}]]\right)}= \tr{[H,\overline{\Lambda}] ~[H,\overline{\Delta}]}
 =- 2 \sum_{k_1\in P ,k_2\in Q } \overline{\Delta}_{k_1}  \big(\overline{\lambda}-\overline{\Lambda}_{k_2}\big) |H_{k_1k_2}|^2
 \leq 0
 $$
 where $P=\{k~|~\overline{\Delta}_k >0\}$ and $Q=\{k~|~\overline{\Delta}_k =0\}$.

 Consequently:
 $$
 f(\rho_{H,D}) \leq f(\brho)  -  \sum_{\mu\in\mathcal{M}} \frac{\trr{( \rho_{H,D}-\brho) Y_\mu}}{2 \trr{\brho Y_\mu}} + O(\| \rho_{H,D}-\brho\|^3)
 .
 $$
 Since  the vector space spanned by the $Y_\mu$  coincide with the set of Hermitian matrices,  the quadratic form $X \mapsto \sum_{\mu\in\mathcal{M}} \frac{\trr{X Y_\mu}}{2 \trr{\brho Y_\mu}}$ is non-degenerate ($X$ is any Hermitian matrix) and  $f$ is strongly concave. Thus  we have  $f(\rho) <  f(\brho)$  for $\rho \neq \brho$ close to $\brho$. Consequently, $\brho$ is a strict local maximum and  this  maximum  is unique and global  since $f$ is concave.

\end{proof}

   \begin{thm}\label{thm:Qtomo2}
Consider the  log-likelihood function $f$ defined in~\eqref{eq:f}. Assume that  the $Y_\mu$'s  span  the set of Hermitian matrices. Denote by $\brho$ the  unique maximum of $f$ on $\DD$ and define a projector $\overline{P}$ such that, in addition to the necessary and sufficient conditions of  lemma~\ref{lem:Qtomo1}, we have $\ker\left(\overline{\lambda} I - \left.\nabla f\right|_{\brho}\right)= \ker (I-\overline{P})$. Then, for any Hermitian operator $A$, its Bayesian mean defined in~\eqref{eq:Amean} admits the following asymptotic expansion
$$
I_A(\Nm)= \tr{A \brho} + O(1/\Nm)
$$
and its Bayesian variance defined in~\eqref{eq:Avar} satisfies
$$
V_A(\Nm)= \tr{A_{\parallel} ~\left(\bF\right)^{-1}\!\!(A_\parallel)}/\Nm + O(1/\Nm^2)
$$
where
\begin{itemize}
  \item for any Hermitian operator $B$, $B_{\parallel}$ stands for is orthogonal projection on the tangent space  at $\brho$ to the submanifold of Hermitian matrices with a rank equal to the rank of  $\brho$ and of unit trace. It  reads
   \begin{equation} \label{eq:ProjOrtho}
     B_{\parallel}= B - \frac{\tr{B \overline{P}}}{\tr{\overline{P}}} \overline{P} - (I-\overline{P}) B (I-\overline{P});
   \end{equation}
   when $\brho$ is full rank, $ B_{\parallel}= B - \tr{B}I/d$ since $\overline{P}=I$;

  \item the linear super-operator $\bF$  corresponds to the Hessian at $\brho$  of the restriction of $f$ to the manifold of Hermitian matrices of rank equal to the rank of $\brho$ and with  trace one. Its  reads for any Hermitian operator $X$,
\begin{equation}\label{eq:FisherInfo}
  \bF(X)= \sum_{\mu} \frac{ \tr{ X Y_{\mu\parallel} }}{\trr{ \brho Y_\mu}} Y_{\mu\parallel}
+  \left(\overline{\lambda}I-\left.\nabla f\right|_{\brho}\right) X \brho^{+} +   \brho^{+} X \left(\overline{\lambda}I-\left.\nabla f\right|_{\brho}\right)
\end{equation}
with   $\brho^{+}$ the Moore-Penrose pseudo-inverse of $\brho$; the  restriction of $X \mapsto \tr{X \bF(X)}$  to the tangent space at $\brho$  is positive definite; thus the restriction of $\bF$ to this tangent space is invertible and can be seen as the analogue of the Fisher information; its inverse at $A_\parallel$  is denoted here above  by $\left(\bF\right)^{-1}\!\!(A_\parallel)$.
\end{itemize}
   \end{thm}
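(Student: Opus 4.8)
The plan is to reduce the integrals \eqref{eq:Amean} and \eqref{eq:Avar} to the canonical half-space form treated in Theorem~\ref{thm:boundary} by choosing local coordinates on $\DD$ adapted to the spectral splitting of $\brho$. By the unitary invariance already exploited in Lemma~\ref{lem:Qtomo1}, I first diagonalize $\brho$ and write every Hermitian matrix in block form with respect to the orthogonal decomposition relative to $\bP$ and $I-\bP$. Writing $r=\mathrm{rank}(\brho)$ and the deficiency $q=d-r$, a nearby density operator decomposes into a range block (an $r\times r$ positive block near the nonzero part of $\brho$), an off-diagonal coupling block, and a kernel block $\rho_{22}=(I-\bP)\rho(I-\bP)\geq 0$ of size $q\times q$ which vanishes at $\brho$. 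The directions tangent to the rank-$r$, trace-one manifold (traceless range block and the off-diagonal block, of total real dimension $n=d^2-1-q^2$) play the role of $z$, while the $q^2$ real dimensions of the kernel cone $\rho_{22}\geq 0$ are transverse and play the role of $x$. Along the transverse cone the first-order change of $f$ is $-\tr{M\rho_{22}}$ with $M=\blambda I-\left.\nabla f\right|_{\brho}$ restricted to $\ker(\brho)$; the extra hypothesis $\ker(\blambda I-\left.\nabla f\right|_{\brho})=\ker(I-\bP)$ guarantees $M\succ0$, the analogue of $\left.\dv{f}{x}\right|_{(0,0)}<0$, while strong concavity of the restriction of $f$ to the manifold (Lemma~\ref{lem:Qtomo1}, using that the $Y_\mu$ span) gives the negative-definite tangential Hessian $\left.\dvv{f}{z}\right|_{(0,0)}$.

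The point that makes Theorem~\ref{thm:boundary} applicable, despite the transverse part being a $q^2$-dimensional cone rather than a single half-line, is a whitening-plus-radial change of variables. Setting $\omega=M^{1/2}\rho_{22}M^{1/2}$ turns the transverse rate into the clean linear form $-\tr{\omega}$, the constant Jacobian being absorbed into the prefactor. Writing $\omega=x\,\sigma$ with $x=\tr{\omega}\geq0$ the single radial transverse variable and $\sigma$ a $q\times q$ density operator ranging over a fixed compact set, the Euclidean volume element of the cone factorizes, by homogeneity, as $x^{q^2-1}\,\mathrm dx\,\mathrm d\sigma$ up to a constant. Since after whitening the exponential weight no longer depends on the shape $\sigma$, integrating $\sigma$ out over its compact domain contributes only a constant at leading order, and the remaining integral is exactly of the form \eqref{eq:IgLoc} with weight $x^m$ for $m=q^2-1=(d-r)^2-1$; in particular $m=0$ recovers the smooth-boundary (rank-deficiency-one) case.

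It then remains to identify the geometric objects in the conclusion. The tangent space $T$ to the rank-$r$ trace-one manifold at $\brho$ consists of the Hermitian $X$ with $(I-\bP)X(I-\bP)=0$ and $\tr{X}=0$; computing its orthogonal complement (spanned by $\bP$ and by arbitrary kernel blocks $(I-\bP)Y(I-\bP)$) yields the projector \eqref{eq:ProjOrtho}, and for tangent $\delta\rho$ one has $\tr{A\,\delta\rho}=\tr{A_{\parallel}\,\delta\rho}$, so the observable enters only through $A_{\parallel}$. The tangential Hessian comes from expanding $f$ to second order along the manifold, which is curved since staying at rank $r$ forces the kernel block to equal the Schur complement $\rho_{12}^\dag\rho_{11}^{-1}\rho_{12}$, quadratic in the off-diagonal block. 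The flat part reproduces the Fisher-information sum $\sum_\mu\frac{\tr{X Y_{\mu\parallel}}}{\trr{\brho Y_\mu}}Y_{\mu\parallel}$, while the first-order contribution $\tr{\left.\nabla f\right|_{\brho}\,\delta^{(2)}\rho}$ of the curvature reproduces the pseudo-inverse terms $(\blambda I-\left.\nabla f\right|_{\brho})X\brho^{+}+\brho^{+}X(\blambda I-\left.\nabla f\right|_{\brho})$, which are exactly the off-diagonal commutator terms already met in the proof of Lemma~\ref{lem:Qtomo1}; this super-operator is $\bF$ of \eqref{eq:FisherInfo}. Corollary~\ref{cor:BayesianMeanVariance} then yields $I_A(\Nm)=\tr{A\brho}+O(1/\Nm)$ and, because only the tangential (width $\Nm^{-1/2}$) fluctuations contribute to the variance at leading order whereas the transverse ones are $O(\Nm^{-1})$, the formula $V_A(\Nm)=\tr{A_{\parallel}\,(\bF)^{-1}(A_{\parallel})}/\Nm+O(1/\Nm^2)$; Lemma~\ref{lem:trace} ensures this trace is independent of the chart $z$ chosen above.

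I expect the main obstacle to be the rigorous bookkeeping of the two reductions at once. On the geometric side the delicate point is establishing \eqref{eq:FisherInfo} with the correct curvature terms: one must expand $f$ to second order along the Schur-complement-constrained manifold and check that the $\brho^{+}$ terms assemble exactly as stated, matching the quadratic form of Lemma~\ref{lem:Qtomo1}. On the analytic side the delicate point is justifying the whitening/radial reduction uniformly, namely that the angular $\sigma$-integration and all transverse--tangential cross terms are genuinely subleading, that the constrained kernel cone is mapped diffeomorphically onto the standard half-space chart of Theorem~\ref{thm:boundary}, and that the remainders survive normalization in the ratios \eqref{eq:Amean}--\eqref{eq:Avar}.
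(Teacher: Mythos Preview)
Your proposal is correct and follows essentially the same strategy as the paper: block-decompose around the diagonalized $\brho$, write the kernel block radially as $x\sigma$ with $\sigma$ a $(d-r)\times(d-r)$ density matrix to obtain the weight $x^{(d-r)^2-1}$, apply Theorem~\ref{thm:boundary} for each fixed $\sigma$, and then integrate over the compact shape variable. The only cosmetic differences are that the paper omits your whitening step (keeping instead the $\sigma$-dependent transverse rate $\blambda-\tr{\Lambda_{d-r}\sigma}$, which produces identical $\sigma$-integrals in numerator and denominator that cancel in the ratios) and parametrizes the rank-$r$ manifold via unitary conjugation $e^{iH}(\cdot)e^{-iH}$ and the Baker--Campbell--Hausdorff expansion rather than via the Schur complement, recovering the curvature terms of~\eqref{eq:FisherInfo} from the second-order BCH contributions.
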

   \begin{proof}
The Hessian of $f$ at  $\rho\in\DD$ where $f(\rho)> -\infty$  reads:
$$
\left.\nabla^2 f\right|_{\rho} (X,Z)= - \sum_{\mu} \frac{\tr{X Y_\mu} \tr{Z Y_\mu}}{\trr{\rho Y_\mu}}
$$
where $X$ and $Z$ are any Hermitian matrices.  Since it is positive definite, $f$ is strongly concave. Consequently  the argument of maximum of $f$ on $\DD$ is unique, denoted $\brho$ and satisfies the condition of lemma~\ref{lem:Qtomo1}.
Take a small neighbourhood $\VV$  of $\brho$ in $\DD$. Then there exists a $\epsilon >0$  such that, for $\rho\in\DD/\VV$, $f(\rho) \leq f(\brho) - \epsilon$.
 To investigate $\int_{\VV} e^{\Nm (f(\rho)-f(\brho))}~ \mathbb{P}_0(\rho) ~\mathrm d\rho$, we consider the following local coordinates based on the spectral decomposition of $\brho= U \overline{\Delta} U^\dag$ with $U$ unitary and $\overline{\Delta}$ diagonal with  entries $0 = \overline{\delta}_1\leq \overline{\delta}_2 \leq \ldots\leq \ldots, {\overline{\delta}}_d \leq 1$ with $\sum_{k=1}^d \overline{\delta}_k=1$.
Denote by $r$ the rank of $\brho$ and assume that $r < d$ (the case $r=d$ is much simpler, relies on theorem~\ref{thm:interior} and is left to the reader). We have $\overline{\delta}_k=0$ for $k$ between $1$ and $d-r$ and $\overline{\delta}_k>0$ for $k$ between $d-r+1$ and $d$.
Since the volume element $\mathrm d\rho $  used in~\eqref{eq:Amean} and~\eqref{eq:Avar} is unitary invariant  \cite[page 42]{MehtaBook2004}, we can assume without lost of generality  that $\brho$ is diagonal (change $\mathbb{P}_0(\bullet)$ to $\mathbb{P}_0(U \bullet  U^\dag)$ and replace each $Y_\mu$ by $U^\dag Y_\mu U$  in the definition~\eqref{eq:f} of $f$). Consider the following map
$$
(\xi,\zeta,\omega) \mapsto \Upsilon=
\exp \left(    \begin{bmatrix} 0 & \omega\\  -\omega^\dag & 0\end{bmatrix} \right)
     \begin{bmatrix}\qquad  \xi\qquad  &  0 \\ 0 & \overline{\Delta}_r + \zeta - \frac{\tr{\xi}}{r} I_r \end{bmatrix}
     \exp   \left( \begin{bmatrix} 0 & -\omega\\  \omega^\dag  &0\end{bmatrix} \right)
$$
where $\xi$ is a $(d-r)\times (d-r)$ Hermitian matrix , $\omega$ is $(d-r)\times r$ matrix with complex entries, $\zeta$ is a $r\times r$ Hermitian matrix of trace $0$, $I_r$ is the identity matrix of size $r$ and $\overline{\Delta}= \begin{bmatrix} 0 &  0 \\ 0 & \overline{\Delta}_r  \end{bmatrix}$. This map is a local diffeomorphism  from a neighbourhood of $(0,0,0)$ to a neighbourhood of $\brho$ in the set of Hermitian matrices of trace one since its tangent map at zero, given by:
\begin{equation} \label{eq:tangentmap}
  (\delta \xi,\delta\zeta, \delta\omega) \mapsto  \begin{bmatrix}\qquad  \delta \xi\qquad  &  \delta\omega~ \overline{\Delta}_r  \\ \overline{\Delta}_r~ \delta\omega^\dag &
\delta\zeta - \frac{\tr{\delta\xi}}{r} I_r\end{bmatrix}= \delta\rho
\end{equation}
is bijective (local inversion theorem).   Thus, we have:
\begin{multline*}
  \int_{\VV} e^{\Nm (f(\rho)-f(\brho))}~ \mathbb{P}_0(\rho) ~\mathrm d\rho
  \\ = \int_{ \Upsilon^{-1}(\VV)} e^{\Nm (f(\xi,\zeta,\omega) - f(0,0,0))} \mathbb{P}_0(\xi,\zeta,\omega) J(\xi,\zeta,\omega) \mathrm  d\xi ~\mathrm d\zeta~\mathrm d\omega
\end{multline*}
where $f(\xi,\zeta,\omega)$ and $\mathbb{P}_0(\xi,\zeta,\omega)$ stand for $f(\Upsilon(\xi,\zeta,\omega))$ and $\mathbb{P}_0(\Upsilon(\xi,\zeta,\omega))$ and where
$J(\xi,\zeta,\omega)$ is the Jacobian of this change of coordinates.

Since the  constraint $\Upsilon(\xi,\zeta,\omega) \geq 0$ reads $\xi \geq 0$, we consider another change of variables to parameterize $\xi \geq 0$ around $0$:  $\Xi:~(x,\sigma,\zeta,\omega) \mapsto  (x \sigma=\xi, \zeta,\omega)  $, where $ x\geq 0$ and $\sigma$ is a $(d-r)\times (d-r)$  density matrix.
Then:
\begin{multline*}
  \int_{\VV} e^{\Nm (f(\rho)-f(\brho))}~ \mathbb{P}_0(\rho) ~\mathrm d\rho
  \\ = \int_{ \Xi^{-1}\big(\Upsilon^{-1}(\VV)\big)} e^{\Nm (f(x\sigma,\zeta,\omega) - f(0,0,0))} \mathbb{P}_0(x\sigma,\zeta,\omega) J(x\sigma,\zeta,\omega)   x^m ~\mathrm dx  ~\mathrm d\sigma ~\mathrm d\zeta~\mathrm d\omega
\end{multline*}
with $m=(d-r+1)(d-r-1)$.  This change of variables is singular, since for $x=0$ it is not invertible. Nevertheless,  the set of coordinates verifying $x=0$ is of zero measure,  and then this has no impact on the integral.
Take $\eta > 0$ small enough  and  adjust the  neighbourhood $\VV$ of $\brho$ such that
$
\Xi^{-1}\big(\Upsilon^{-1}(\VV)\big)$ coincides with the set where $x\in(0,\eta)$, $\sigma\in\DD_{d-r}$ and  all the  real  and imaginary parts of  $\zeta$ and $\omega$ entries belong to $(-\eta,\eta)$.  Following the notations of theorem~\ref{thm:interior}, set $z=(\zeta,\omega)$.
We have $z \in(-\eta,\eta)^n$ with $n=2r(d-r) + (r+1)(r-1)$ and:
\begin{multline*}
  \int_{\VV} e^{\Nm (f(\rho)-f(\brho))}~ \mathbb{P}_0(\rho) ~\mathrm d\rho
\\ = \int_{\sigma\in\DD_{d-r}} \left(\int_{(x,z)\in (0,\eta)\times (-\eta,\eta)^n} e^{\Nm f(x\sigma, z)} x^m J(x\sigma,z) \mathbb{P}_0(x\sigma,z)~\mathrm dx~\mathrm dz \right)  \mathrm d\sigma
.
\end{multline*}
For each $\sigma\in\DD_{d-r}$, let us    use~\eqref{case1:thm:boundary}, with $J(x\sigma,z) \mathbb{P}_0(x\sigma,z)$ standing for $g(x,z)$.
We have $g(0,0)= J(0,0) \mathbb{P}_0(0,0) >0$.
By construction, we have:
$$
 f(x\sigma, z) =  x f_1(x,\sigma,z) + f(0,z)
$$
where $f_1(x,\sigma,z)$ is analytic versus $(x,z)$ and $f_1(0,\sigma,0)=  \left( \tr{\Lambda_{d-r}\sigma} -\overline{\lambda}\right)  $. This is
based on~\eqref{eq:tangentmap} and  on the diagonal structure  $\left.\nabla f\right|_{\brho}=\begin{bmatrix} \Lambda_{d-r} &  0 \\ 0 & \overline{\lambda} I_r  \end{bmatrix} $. By assumptions, $\Lambda_{d-r} < \overline{\lambda} I_{d-r}$. Thus, there exists $ \epsilon' >0$ such that for all  $\sigma  $,   $f_1(0,\sigma, 0)  < - \epsilon'$ and  $\dv{f}{x} <-\epsilon'$ at $(x,z)=0$, for any $\sigma\in\DD_{d-r}$.
Let us consider now the expansion of $z \mapsto f(0,z)$ up to order $2$ versus $z$. Using $\delta z=(\delta \zeta,\delta \omega)$  and~\eqref{eq:tangentmap}, completed via second order terms derived form  the Backer-Campbell-Hausdorf formula,  we find:
$$
\delta\rho = \begin{bmatrix}\qquad \delta \omega~\overline{\Delta}_r ~\delta\omega^\dag \qquad   &  \delta\omega~ (\overline{\Delta}_r+\delta\zeta)  \\ (\delta\zeta +\overline{\Delta}_r)~ \delta\omega^\dag &
\delta\zeta- \frac{\delta\omega^\dag \delta\omega\overline{\Delta}_r + \overline{\Delta}_r \delta\omega^\dag \delta\omega}{2}   \end{bmatrix} + 0 (\|\delta z\|^3)
.$$ Consequently,
\begin{multline} \label{eq:DLf}
  f(0,\delta z)=f(\brho)+\tr{\left. \nabla f\right|_{\brho} ~\delta \rho} + \tfrac{1}{2} \left. \nabla^2 f\right|_{\brho} (\delta\rho,\delta\rho)   + O(\|\delta\rho\|^3)
  \\
   = f(\brho) -  \tr{(\overline{\lambda} I_{d-r}-\Lambda_{d-r}) \delta\omega ~\overline{\Delta}_r~\delta\omega ^\dag }-  \tfrac{1}{2} \sum_{\mu} \frac{\trr{ \delta \rho Y_\mu} }{\trr{\brho Y_\mu}}
   .
\end{multline}
This  shows that  $\dv{f}{z}$ vanishes at $(0,z)$ and that $\dvv{f}{z}$ is negative definite at $(0,z)$ ($\overline{\lambda} I_{d-r}>\Lambda_{d-r}$) and independent of $\sigma$. All the assumptions necessary for~\eqref{case1:thm:boundary} are fulfilled and we can write:
\begin{multline*}
  \int_{\DD} e^{\Nm f(\rho)} ~\mathbb{P}_0(\rho)~ \mathrm d\rho =
\\  \kappa_0  ~e^{f(\brho)\Nm} \Nm^{-m-n/2-1}
\int_{\sigma\in\DD_{d-r}}
\tfrac{\mathrm d\sigma  }{ \left(\overline{\lambda} - \tr{\Lambda_{d-r}\sigma} \right)^{m+1}}
           + O\Big(e^{f(\brho)\Nm}\Nm^{-m-n/2-2}\Big)
\end{multline*}
where $\kappa_0= \tfrac{\mathbb{P}_0(\brho) J(0,0) m! ~(2\pi)^{n/2}}{\sqrt{\left| \det\left(\left.\dvv{f}{z}\right|_{(0,0)}\right) \right|~ }}$.

Similarly we have;
\begin{multline*}
  \int_{\DD} \tr{\rho A} e^{\Nm f(\rho)}~ \mathbb{P}_0(\rho) ~\mathrm d\rho =
\\  \kappa_0 \tr{A\brho} ~e^{f(\brho)\Nm} \Nm^{-m-n/2-1}
\int_{\sigma\in\DD_{d-r}}
\tfrac{\mathrm d\sigma  }{ \left(\overline{\lambda} - \tr{\Lambda_{d-r}\sigma} \right)^{m+1}}
           + O\Big(e^{f(\brho)\Nm}\Nm^{-m-n/2-2}\Big)
           .
\end{multline*}
Consequently, we have proved that:
$I_A(\Nm)= \tr{\brho A} + O(1/\Nm)$.

Simple computations show that the expansion  of $V_A(\Nm)$  reduces to the expansion of the following integral
$\int_{\DD}\trr{(\rho-\brho) A}  e^{\Nm f(\rho)}~ \mathbb{P}_0(\rho) ~\mathrm d\rho$  based on~\eqref{case2:thm:boundary} with
$g(x,\sigma,z)= J(x\sigma,z) \mathbb{P}_0(x\sigma,z) h(x\sigma,z) $, $h(x\sigma,z)=\trr{(\Upsilon(x\sigma,z)-\brho) A}$ and $z=(\zeta,\omega)$.
Since $\left.\dvv{g}{z}\right|_{(0,\sigma, 0)}= J(0,0) \mathbb{P}_0(\brho) \left.\dvv{h}{z}\right|_{(0,\sigma,0)}$ is independent of $\sigma$, we have using~\eqref{case2:thm:boundary}:
\begin{multline*}
  \int_{\DD} \trr{(\rho-\brho) A} e^{\Nm f(\rho)}~ \mathbb{P}_0(\rho) ~d\rho =
\\  \kappa_0 \tfrac{\tr{-\left.\dvv{h}{z}\right|_{(0,0)} \left(\left.\dvv{f}{z}\right|_{(0,0)}\right)^{-1} }}{2} ~e^{f(\brho)\Nm} \Nm^{-m-n/2-2}
\int_{\sigma\in\DD_{d-r}}
\tfrac{d\sigma  }{ \left(\overline{\lambda} - \tr{\Lambda_{d-r}\sigma} \right)^{m+1}}
\\
           + O\Big(e^{f(\brho)\Nm}\Nm^{-m-n/2-3}\Big)
           .
\end{multline*}
Consequently, we have
$
V_A(\Nm) = \tfrac{\tr{-\left.\dvv{h}{z}\right|_{(0,0)} \left(\left.\dvv{f}{z}\right|_{(0,0)}\right)^{-1} }}{2\Nm} + O(\Nm^{-2}).
$
The fact that the trace in the numerator coincides with $2\tr{A_{\parallel} ~\left(\bF\right)^{-1}\!\!(A_\parallel)}$ results form the following computations.
\begin{itemize}
  \item Formula~\eqref{eq:ProjOrtho} is unitary invariant. In the frame where
  $\brho= \begin{bmatrix} 0 &  0 \\ 0 & \overline{\Delta}_r  \end{bmatrix}$ is diagonal,  the tangent  space  to the manifold at  $\brho$ of rank $r$ Hermitian matrix is given by $\delta\rho$ satisfying~\eqref{eq:tangentmap} with $\delta\xi=0$ and $(\delta\zeta,\delta\omega)$ arbitrary. One can check that~\eqref{eq:ProjOrtho} provides the following block decomposition $\begin{bmatrix} 0 &  A_{0,r} \\ A_{0,r}^\dag \quad&  A_r - \tfrac{\tr{A_r}}{r} I_r \end{bmatrix}$ for $A_\parallel$
   when $A=\begin{bmatrix} A_{0} &  A_{0,r}\\ A_{0,r}^\dag   & A_r \end{bmatrix}$. One can also  check that  $A_\parallel$ belongs to this tangent space and that $\tr{A \delta\rho}= \tr{A_\parallel \delta \rho}$  for any tangent element  $\delta\rho$.

  \item Since $h(0,z)=\trr{(\Upsilon(0,z)-\brho) A}$, we have:
  $$
 \left. \dvv{h}{z}\right|_{(0,0)} (\delta z, \delta z) = 2 \trr{\delta \Upsilon ~A}=  2 \trr{\delta \Upsilon ~A_\parallel}
  $$
  with $\delta \Upsilon=
  \begin{bmatrix}0   &  \delta\omega~ \overline{\Delta}_r \\
  \overline{\Delta}_r~ \delta\omega^\dag &
    \delta\zeta   \end{bmatrix} $ and $\delta z= (\delta\zeta,\delta\omega)$.
    This means that $\left. \dvv{h}{z}\right|_{(0,0)}$ is colinear to the orthogonal projector on  the direction given by $A_\parallel$ in the tangent space to $\brho$.
    This implies that  $\tr{\left.\dvv{h}{z}\right|_{(0,0)} \left(\left.\dvv{f}{z}\right|_{(0,0)}\right)^{-1} }$  corresponds to twice the  value at $A_\parallel$ of the quadratic form attached to the inverse of the  Hessian at $\brho$ of the restriction    of  $f$ to the manifold of rank $r$ Hermitian matrices of  trace one (we use  here lemma~\ref{lem:trace}).

  \item This Hessian is given by~\eqref{eq:FisherInfo} since, for $X= \delta \Upsilon = \begin{bmatrix}0   &  \delta\omega~ \overline{\Delta}_r \\
  \overline{\Delta}_r~ \delta\omega^\dag &
    \delta\zeta   \end{bmatrix} $,   we have:
    \begin{multline*}
     \tr{ X  \left(\overline{\lambda}I-\left.\nabla f\right|_{\brho}\right) X \brho^{+} +  X  \brho^{+} X \left(\overline{\lambda}I-\left.\nabla f\right|_{\brho}\right)}
      \\
      =  2\tr{(\overline{\lambda} I_{d-r}-\Lambda_{d-r}) \delta\omega ~\overline{\Delta}_r~\delta\omega ^\dag }
    \end{multline*}
    because  $\brho^{+}=\begin{bmatrix} 0 &  0 \\ 0 & \overline{\Delta}_r^{-1}  \end{bmatrix} $. We recover from~\eqref{eq:DLf} that
    $f(0,z)= f(\brho) - \tfrac{1}{2} \tr{ X~ \bF(X)} $, i.e., that $\bF$ is indeed  the Hessian at $\brho$ of the restriction of $f$ to rank-r Hermitian matrices of  trace one.

\end{itemize}

   \end{proof}

\section{Concluding remark}

When maximum likelihood estimation provides a quantum state of reduced rank, we have provided, based on asymptotic expansions of specific multidimensional  Laplace integrals, an estimate of  the Bayesian mean and variance for any observable.  We guess that similar asymptotic expansions   could be of some interest  for quantum compress sensing~\cite{GrossLFBE2010PRL} when the dimension of underlying Hilbert space is large and the rank is small.

\bibliographystyle{plain}

\end{document}